\documentclass[sigconf,nonacm, screen]{acmart}
\usepackage{graphicx} 
\usepackage{algorithm}
\usepackage{algpseudocode}
\usepackage{bbm}
\usepackage{multirow}
\usepackage{soul}
\usepackage{bm}
\usepackage{slnotation}
\usepackage{booktabs}

\usepackage{enumitem}
\usepackage{subcaption}
\usepackage{amsmath}
\usepackage{amsfonts}
\usepackage{bm}

\newcommand{\slSet}[1]{\mathcal{#1}}


\newcommand{\query}[0]{q}

\newcommand{\corpus}[0]{\slSet{D}}
\newcommand{\doc}[0]{d}

\newcommand{\intents}[0]{\slSet{C}}
\newcommand{\intentsq}[1]{\intents(#1)}
\newcommand{\intent}[0]{c}
\newcommand{\intenti}[1]{\intent_{#1}}
\newcommand{\numintents}[0]{m}


\newcommand{\condprob}[2]{\text{Pr}(#1 \mid #2)}

\newcommand{\psExpectExt}[3]{\underset{#1 \sim #2}{\mathbb{E}}\left[#3\right]}

\newcommand{\relevanceQI}[3]{\mathrm{rel}(#1 \mid #2, #3)}
\newcommand{\relevanceQ}[2]{\mathrm{rel}(#1 \mid #2)}
\newcommand{\relmax}[0]{\mathrm{rel}_\text{max}}


\newcommand{\topk}[0]{k}
\newcommand{\rankingk}[0]{R_{\topk}}
\newcommand{\naive}[1]{\text{Naive}(#1)}
\newcommand{\argsort}[0]{\operatorname*{argsort}}

\newcommand{\valuemetric}[0]{V}
\newcommand{\valuegiven}[3]{\valuemetric(#1 \mid #2, #3)}

\newcommand{\vstd}[2]{\valuemetric_{\text{std}}(#1 \mid #2)}
\newcommand{\viw}[2]{\valuemetric_{\text{IW}}(#1 \mid #2)}

\newcommand{\dcg}[0]{\text{DCG}}
\newcommand{\ndcg}[0]{\text{nDCG}}
\newcommand{\err}[0]{\text{ERR}}
\newcommand{\rbp}[0]{\text{RBP}}
\newcommand{\precisionat}[1]{\text{Precision@}#1}


\newcommand{\alphandcg}[0]{$\alpha$-nDCG}
\newcommand{\xquad}[0]{\text{xQuAD}}
\newcommand{\dmeasure}[0]{\text{D-measure}}


\newcommand{\rankingkpr}[0]{R'_{\topk}}

\newcommand{\loss}[3]{\ell(#1, #2, #3)}
\newcommand{\vtgt}[2]{\valuemetric_{\text{tgt}}(#1, #2)}
\newcommand{\vrisk}[3]{\valuemetric_{\text{Risk}}(#1 \mid #2; #3)}

\newcommand{\pospart}[1]{\left[#1\right]_{+}}
\newcommand{\prob}[1]{\Pr\bigl(#1\bigr)}
\newcommand{\condexpect}[2]{\mathbb{E}\bigl[#1 \;\big|\; #2 \bigr]}


\newcommand{\vrisker}[0]{\text{VRisker}}

\newcommand{\numdocs}[0]{n}

\newcommand{\riskreduction}[1]{\Phi(#1)}

\newcommand{\rankingkopt}[0]{R_{\topk}^{\star}}

\newcommand{\ndcgdiscount}[1]{w_{#1}}



\newcommand{\intentpr}[0]{c'} 

\newcommand{\vriskname}[0]{\valuemetric_{\text{Risk}}}
\newcommand{\vstdname}[0]{\valuemetric_{\text{std}}}
\newcommand{\viwname}[0]{\valuemetric_{\text{IW}}}

\newcommand{\iwgreedy}[0]{\text{IW-Greedy}}

\newcommand{\metricobj}[0]{\text{obj}}
\newcommand{\deltavobj}[1]{\Delta \valuemetric_{\metricobj}(#1)}
\newcommand{\vobj}[1]{\valuemetric_{\metricobj}(#1)}

\newcommand{\capicon}[1]{%
  \raisebox{-0.2ex}{\includegraphics[height= 2.0 ex]{#1}}}

\DeclareMathOperator*{\argmin}{arg\,min}
\newcommand{\area}{\capicon{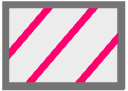}}

\newtheorem{theorem}{Theorem}
\newtheorem{proposition}{Proposition}
\newtheorem{example}{Example}

\author{Rikiya Takehi}
\authornote{Preprint of a paper accepted at WSDM 2026. Final version soon to appear.}
\orcid{0009-0003-6336-8064}
\affiliation{
  \institution{Waseda University}
  \city{Tokyo}
  \country{Japan}
}
\email{rikiya.takehi@fuji.waseda.jp}

\author{Fernando Diaz}
\orcid{0000-0003-2345-1288}
\affiliation{%
  \institution{Carnegie Mellon University}
  \city{Pittsburgh}
  \state{PA}
  \country{USA}
}
\email{diazf@acm.org}

\author{Tetsuya Sakai}
\orcid{0000-0002-6720-963X}
\affiliation{
  \institution{Waseda University}
  \city{Tokyo}
  \country{Japan}
}
\email{tetsuya@waseda.jp}

\title{Diversification as Risk Minimization}

\begin{document}
\begin{abstract}
Users tend to remember failures of a search session more than its many successes. This observation has led to work on search robustness, where systems are penalized if they perform very poorly on some queries. However, this principle of robustness has been overlooked within a single query. An ambiguous or underspecified query (e.g., ``jaguar'') can have several user intents, where popular intents often dominate the ranking, leaving users with minority intents unsatisfied. Although the diversification literature has long recognized this issue, existing metrics only model the \emph{average} relevance across intents and provide no robustness guarantees. More surprisingly, we show theoretically and empirically that many well‑known diversification algorithms are no more robust than a naive, non-diversified algorithm. To address this critical gap, we propose to frame diversification as a risk-minimization problem. We introduce \textbf{VRisk}, which measures the expected risk faced by the least‑served fraction of intents in a query. Optimizing VRisk produces a robust ranking, reducing the likelihood of poor user experiences. We then propose \textbf{VRisker}, a fast greedy re‑ranker with provable approximation guarantees. Finally, experiments on NTCIR INTENT‑2, TREC Web 2012, and MovieLens show the vulnerability of existing methods. VRisker reduces worst-case intent failures by up to 33\% with a minimal 2\% drop in average performance.

Code provided in \url{https://github.com/RikiyaT/VRisk}.
\end{abstract}

\maketitle

\section{Introduction}
Users remember the worst search sessions far more than the average ones~\citep{li2009worstintent, khabsa2016worstintent, white2009worstintent, collins2009risk, wang2012risk}. This insight has motivated extensive research on search robustness~\citep{wang2012risk,dincer2016risk, Diaz_2024, voorhees2005robusttrec, dincer2014risk, collins2009risk, robertson2006risk}. These prior works typically consider the downside risk, with the goal of minimizing the chances that users are unsatisfied with the search results. The fundamental principle is that search systems that perform poorly on some queries should be penalized, even if they perform well on average. However, these studies have only focused on the risks of rankings at the \textit{query level} and have not considered the potential downside risks \textit{within a query}.

This limitation becomes particularly problematic when queries are ambiguous or underspecified, leading to multiple possible user intents~\citep{clarke2009diversity}. Consider a search for ``jaguar.'' While many users may seek information about the car brand, those interested in the animal may find search results entirely dominated by the car brand. Naive ranking algorithms, designed to maximize a single relevance score, naturally produce such unbalanced results, where users with minority intents end up failing their search sessions.

The intent-based search diversification literature has long attempted to address this robustness-within-query challenge, aiming to generate a ranking to cover multiple intents~\citep{agrawal2009diversifysearch, radlanski2008diverseonline, sakai2011diversity, clarke2008alphadiversity, pm2dang2012}. Yet, although preventing intent failures is a core motivation for diversity, existing approaches typically measure the average relevance across intents rather than explicitly addressing the worst‑case outcomes. Importantly, we show mathematically that, in many cases, the family of intent-weighted diversity metrics~\citep{clarke2008alphadiversity, agrawal2009diversifysearch, sakai2011diversity} offers \textit{no protection} in terms of robustness. For example, in the toy example of Table~\ref{tab:combined}, we show that NDCG-IA~\citep{agrawal2009diversifysearch} favors a ranking that completely ignores 49\% of user intents. Consequently, as our experiments reveal, many existing diversification algorithms are no more robust than a naive, non-diversified ranking.

In light of this, we convert diversification into a principled risk-minimization problem. We argue that a search result must be robust within a query, in a way that minimizes the chances that users are unsatisfied. Specifically, we introduce the first framework for intent-aware risk minimization within a ranking, providing an evaluation metric \textbf{VRisk}. Drawing inspiration from finance, VRisk analyzes risk by adapting Conditional Value at Risk (CVaR), evaluating the expected relevance loss for the least-addressed fraction of user intents within each query. VRisk answers the question: ``\textit{How bad is the ranking for the worst $\beta$-fraction of intents.}'' VRisk can be baseline-free, measuring absolute risk, but can also be computed relative to a baseline system, following the common practice in existing IR risk metrics like URisk~\citep{wang2012risk} and GeoRisk~\citep{dincer2016risk}. VRisk is tunable and intuitive, giving both risk and guarantee in its evaluation. Additionally, we offer an efficient optimization algorithm called \textbf{VRisker} with a strong optimality guarantee, making our approach practical and theoretically robust for real-world problems.

Finally, we demonstrate that the robustness problem is mitigated through experiments on NTCIR INTENT-2~\citep{sakai2013ntcirintenttask} and TREC Web 2012~\citep{Clarke2012WebTrack} datasets. Additionally, this problem extends beyond search, as intent-based diversity has become a key concern in modern recommender systems as well~\citep{Wang2025mood, Vargas2011Intent,Jannach2024Survey,Li2023IntEL,Chang2023Latent,Sun2024LLM,Wang2023IDCL}. Thus, we also test on the MovieLens 32M~\citep{harper2015movielenslatest} to verify that our risk-aware framework addresses the same issues in recommender systems. Our results show that the algorithm reduces worst-case intent failures by up to 33\% while sacrificing as little as 2\% in average performance.

Our contributions are summarized as follows:
\begin{itemize}
    \item We present the first framework for intent-aware risk minimization in search, introducing a novel CVaR-based metric for both absolute and baseline-relative risk assessment, formally bridging diversification and robustness.
    \item We demonstrate mathematically and empirically that many mainstream diversification methods are fundamentally not robust, often performing no better than a naive baseline.
    \item We develop an efficient greedy algorithm with a strong approximation guarantee.
    \item We demonstrate through extensive experiments on three datasets that our algorithm consistently reduces worst-case intent failures by up to 33\% at the cost of only a 2\% drop in average relevance.
\end{itemize}

\section{Related Work}
\label{sec: related work}
\subsection{Risk-Aware Search and Recommendation}
Most influential works on retrieval robustness treat the entire query, rather than individual intents, as the unit of risk~\citep{wang2012risk, dincer2014risk, collins2009risk, robertson2006risk, wen2022drorecrisk}. The proposed risk-aware frameworks generally penalize systems that perform poorly on some queries, instead of looking at the average performance. By favoring such robust algorithms, they minimize the probability of users being unsatisfied. However, while considering risks at the query level, they have not considered the risk that intents may be poorly addressed within a single query.

Portfolio-theory-based approaches~\citep{tsai2012pmpt, wang2009portfolio} assess the variance of a single ranking by treating each document’s relevance score as the ambiguous factor. This perspective differs fundamentally from ours, as we focus on ambiguity at the intent level rather than at the document level.

Other studies consider risks in decision-making, search, and recommendation~\citep{huang2021offpolicyriskassessmentcontextual, kiyohara2024assessingbenchmarkingriskreturntradeoff, togashi2024safecollaborativefiltering, ge2020learning, gupta2023safe}, but model the random variable at the user, session, or policy levels. Our method complements these approaches by applying the tail-risk logic directly to the diversification step, guaranteeing that even the users with minority intents in the query are satisfied.

\subsection{Diversification in Rankings}
The naive ranking algorithm, which ranks documents in the order of relevance, often only shows the majority intents and ignores the minority intents of a user. The literature of diversification in search has long aimed at providing reasonable satisfaction across intent groups~\citep{agrawal2009diversifysearch, radlanski2008diverseonline, sakai2011diversity, clarke2008alphadiversity}. However, despite such clear connections with the motivation for robustness, most diversification frameworks generally consider the average rather than risks or guarantees. In fact, we find that IA metrics~\citep{agrawal2009diversifysearch} and D-measure~\citep{sakai2011diversity}, two of the most common diversification metrics, provide identical results to a naive, non-diversified metric in many cases. Since many diversification studies are built under their logic, we show empirically that many diversification algorithms perform very poorly in terms of robustness.

Some studies provide diversification methods and metrics aiming to minimize the probability of a user not clicking on any document~\citep{agrawal2009diversifysearch, radlanski2008diverseonline, Wang2025mood}, which is fundamentally similar to our goal of robustness. However, typically assume a cascade behavior assumption \textit{and} associate only click probabilities, which are rare cases in practice. As we will show in the experiments, these methods perform poorly with graded relevance (e.g., nDCG) and explicit ratings. Some other studies consider the coverage of information without explicitly considering intents, thus conceptually different from what we aim~\citep{carbonell1998mmr, Panigrahi2012minmax}.

Fairness in rankings is also a related field that ensures that each candidate group gets a fair amount of exposure in a ranking~\citep{Singh_2018exposurefairness, sakai2022versatileframeworkevaluatingranked}. It does not ensure that the \textit{user} is given a diversity of exposure, thus conceptually different from our goal.

\section{Problem Formulation}
\label{sec:risk-framework}

\subsection{Preliminaries}
\label{sec:problem-setting}
We consider the evaluation of a single top-$\topk$ ranking, denoted $\rankingk$. Let $\query$ denote a query with intent set $\intentsq{\query} = \{\intenti{1}, \ldots, \intenti{\numintents}\}$, where each intent $\intenti{i}$ represents a distinct information need associated with $\query$. 

We denote $\condprob{\intent}{\query}$ as the probability that a user issuing query $\query$ has intent $\intent$. Following common prior work~\citep{agrawal2009diversifysearch, clarke2011noveltydiversity, radlanski2008diverseonline, sakai2011diversity}, we assume the intents are mutually exclusive, i.e., $\sum_{\intent \in \intentsq{\query}} \condprob{\intent}{\query} = 1$.

Let $\relevanceQI{\doc}{\query}{\intent} \in [0,\relmax]$ be the relevance of document $\doc\in\corpus$ for query $\query$ given a user intent $\intent$. We suppose that $\relevanceQI{\doc}{\query}{\intent}$ and $\condprob{\intent}{\query}$ are known or estimated beforehand, following the common setting in prior work on intent-based diversity~\citep{agrawal2009diversifysearch, sakai2011diversity, clarke2008alphadiversity, clarke2009diversity, Wang2025mood, Vargas2011Intent}. Their estimation is outside the scope of this work.

Naturally, the \textit{raw} relevance of the document (that ignores intents) is calculated by the expected relevance
\begin{align}
    \relevanceQ{\doc}{\query} &= \psExpectExt{\intent}{\condprob{\intent}{\query}}{\relevanceQI{\doc}{\query}{\intent}}\label{eq: raw relevance}\\
    &=\sum_{\intent\in \intentsq{\query}}\condprob{\intent}{\query}\relevanceQI{\doc}{\query}{\intent}. \nonumber
\end{align}
This formulation is widely adopted in prior work~\citep{sakai2011diversity, brandt2011dynamic, welch2011diversity, chapelle_intent-based_2011, maistro2021principled, wang2018searchdiversity}.

\subsection{Standard Metric} To measure the value of a ranking, we take the average relevance
\begin{equation}
    \vstd{\rankingk}{\query} = \frac{1}{\topk}\sum_{\doc\in \rankingk}\relevanceQ{\doc}{\query}.
    \label{eq:standard metric}
\end{equation}
While we use average relevance as a default metric throughout the main text, the base metric $\valuemetric$ can be replaced with other standard metrics (e.g., \ndcg, \dcg, \err, \precisionat{\topk}, \rbp), which we will also discuss in our experiments. 

The optimal ranking for Eq.~\eqref{eq:standard metric} and most standard metrics like \ndcg{} would be the naive ranking algorithm that selects documents in the descending order of relevance
\begin{equation}
    \naive{\query} = \argsort_{\doc} \relevanceQ{\doc}{\query}.
    \label{eq: naive}
\end{equation}

However, since the relevance of a document is heavily dependent on the intent probability (see Eq.~\eqref{eq: raw relevance}), such naive rankings may contain only documents of the majority intents, and some minority intents may be completely ignored. This is not robust, as it may result in a poor search session when the user had a minority intent~\citep{steck2018cali, agrawal2009diversifysearch, sakai2011diversity, clarke2008alphadiversity}.

\subsection{Intent-Weighted Metric}
We now present another common evaluation method of a ranking \textit{given some user intents}. We denote the value of a ranking \textit{given intent} $\intent$ as the average relevance
\begin{equation}
    \valuegiven{\rankingk}{\query}{\intent} = \frac{1}{\topk}\sum_{\doc \in \rankingk} \relevanceQI{\doc}{\query}{\intent}.
\end{equation}
Similar to Eq.~\eqref{eq:standard metric}, this denotes the average relevance of a ranking, but for a specific user intent $c$.

Given this per-intent value $\valuegiven{\rankingk}{\query}{\intent}$, another straightforward evaluation method of the ranking would be to compute the intent-weighted average (IW metric), denoted
\begin{equation}
    \viw{\rankingk}{\query} = \sum_{\intent \in \intentsq{\query}} \condprob{\intent}{\query}\, \valuegiven{\rankingk}{\query}{\intent}.
    \label{eq:iw}
\end{equation}
This evaluation method also demonstrates the family of ``IA’’ metrics by \citet{agrawal2009diversifysearch} (e.g., NDCG-IA, MRR-IA), which are diversification metrics aiming to value minority intents more. For example, if the base metric $V$ is replaced with nDCG (i.e., $\text{nDCG}_\text{IW}$), this matches NDCG-IA~\citep{agrawal2009diversifysearch}. This is arguably the most common metric for diversity, as the same intent‑weighted paradigm also underlies widely used diversification metrics and algorithms such as \xquad~\citep{santos2010xquad}, ERR-IA~\citep{chapelle_intent-based_2011}, \dmeasure~\citep{sakai2011diversity}, and \alphandcg~\citep{clarke2008alphadiversity}.

\subsection{IW Metrics are Surprisingly not Robust}
\label{sec:iw_not_robust}

\begin{table}[t]
\centering
\caption{Toy example of how $\ndcg_\text{IW}$ (= NDCG-IA~\citep{agrawal2009diversifysearch}) can neglect diversification in a top-2 ranking. (a) is the setting, and (b) shows the evaluation scores of each ranking. Although the intent probability of $c_1$ is only 0.02 more than $c_2$, $\ndcg_\text{IW}$ favors a non-diversified ranking. As a result, 49\% of the users can be unsatisfied with the ranking.}
\begin{subtable}[t]{0.48\linewidth}
    \centering
    \caption{Setting.}
    \label{tab:intents}
    \begin{tabular}{lcc}
        \toprule
        & $c_1$ & $c_2$ \\
        \midrule
        $\prob{c\mid q}$ & 0.51 & 0.49 \\
        \midrule
        $\relevanceQI{\textcolor{red}{d_1}}{q}{c}$ & 1 & 0 \\
        $\relevanceQI{\textcolor{red}{d_2}}{q}{c}$ & 1 & 0 \\
        $\relevanceQI{\textcolor{blue}{d_3}}{q}{c}$ & 0 & 1 \\
        $\relevanceQI{\textcolor{blue}{d_4}}{q}{c}$ & 0 & 1 \\
        \bottomrule
    \end{tabular}
\end{subtable}
\hfill
\begin{subtable}[t]{0.48\linewidth}
    \centering
    \caption{Evaluation scores.}
    \label{tab:ndcg}
    \begin{tabular}{lcc}
        \toprule
        $R_{k=2}$ & $\text{nDCG}_\text{IW}$ & $\text{nDCG}_\text{std}$ \\
        \midrule
        $[\textcolor{red}{d_1}, \textcolor{red}{d_2}]$ & \textbf{0.600} & \textbf{1.000} \\
        $[\textcolor{red}{d_1},\textcolor{blue}{d_3}]$ & 0.502 & 0.871 \\
        $[\textcolor{blue}{d_3}, \textcolor{blue}{d_4}]$ & 0.400 & 0.667 \\
        \bottomrule
    \end{tabular}
\end{subtable}
\label{tab:combined}
\end{table}
While $\viw{\rankingk}{\query}$ builds the ground of most diversification methods, we find that, surprisingly, when the base metric $\valuemetric$ is linear (e.g., \dcg, average relevance, \precisionat{\topk}), the intent-weighted metric in Eq.~\eqref{eq:iw} is identical to the standard metric in Eq.~\eqref{eq:standard metric}. Formally, using average relevance as the base metric $\valuemetric$, we have
\begin{align}
  \viw{\rankingk}{\query}
  &= \sum_{\intent \in \intentsq{\query}} \condprob{\intent}{\query} \left( \frac{1}{\topk}\sum_{\doc \in \rankingk} \relevanceQI{\doc}{\query}{\intent} \right) \nonumber\\
  &= \frac{1}{\topk} \sum_{\doc \in \rankingk} \left(\sum_{\intent \in \intentsq{\query}} \condprob{\intent}{\query}\,\relevanceQI{\doc}{\query}{\intent}\right) = \vstd{\rankingk}{\query}.
  \label{eq:ia-equals-naive}
\end{align}
This means an algorithm that maximizes $V_\text{IW}$ will favour the majority intent just as the $V_\text{std}$ would. \textbf{Thus, for linear base metrics $\bm{\valuemetric}$, the intent-weighted metric \emph{offers no protection} to minority intents.} Interestingly, \dmeasure~\citep{sakai2011diversity}, another common diversification method, also reduces to the standard metric. Consequently, the spectrum of intent-based diversification metrics and algorithms~\citep{agrawal2009diversifysearch, sakai2011diversity, clarke2008alphadiversity, santos2010xquad} are \textit{ineffective at providing diversity or robustness} with linear base metrics like \dcg, \precisionat{\topk}, and average relevance.

For non-linear base metrics like nDCG, Expected Reciprocal Rank (\err)~\citep{chapelle2009err} and Rank Biased Precision (\rbp)~\citep{moffat2008rbp}, the equality like Eq.~\eqref{eq:ia-equals-naive} no longer holds. Yet, these metrics can still fail to measure the diversity of a ranking. Table~\ref{tab:combined} shows a toy example of how $\text{nDCG}_\text{IW}$ favors a ranking dominated by one intent, despite the two intent probabilities being almost equal. This is not at all robust, as \textbf{49\% of the intents are completely ignored}.

Furthermore, empirical results in Section~\ref{sec: experiments} reveal that intent-weighted diversification algorithms for such non-linear metrics are also barely more robust than naive algorithms.

\section{Risk-Sensitive Metric: VRisk}
\label{sec:risk-metric}
To address the limitation that existing diversification methods fail to account for robustness, we convert diversification to a risk-minimization problem. Specifically, we propose a metric, \textbf{VRisk}, that measures the expected loss of the least-addressed $\beta$-fraction of intents, bringing the \emph{Conditional Value at Risk} (CVaR) concept into IR. VRisk is intuitive, baseline-relative, and easy to tune.

For each possible intent, we define a loss function as the loss of intent-level value against the target level
\begin{equation}
\label{eq:loss}
\loss{\rankingk}{\query}{\intent} = \pospart{\vtgt{\query}{\intent} - \valuegiven{\rankingk}{\query}{\intent}},
\end{equation}
where $\pospart{\cdot}=\max(0,\cdot)$ and $\vtgt{\query}{\intent}$ is a \emph{target level} for performance on intent $\intent$. $\vtgt{\query}{\intent}$ is a baseline function capturing the satisfaction threshold for intent $\intent$. Unless stated otherwise, we use the \textbf{oracle target}, which is the best possible ranking for the intent
\begin{equation}
    \vtgt{\query}{\intent} = \max_{\rankingkpr} \valuegiven{\rankingkpr}{\query}{\intent}.
    \label{eq: oracle}
\end{equation}
By setting loss against the oracle, Eq.~\eqref{eq:loss} answers the question: ``How well does the ranking satisfy the intent compared to the ideal case?'' We use loss instead of the raw value, as it is fair even when there exist more relevant documents for one intent than the others. Note, however, that when the target level is set to the oracle, the minimization of loss matches exactly the maximization of raw value.

\begin{figure}
    \centering
    \includegraphics[width=1.0\linewidth]{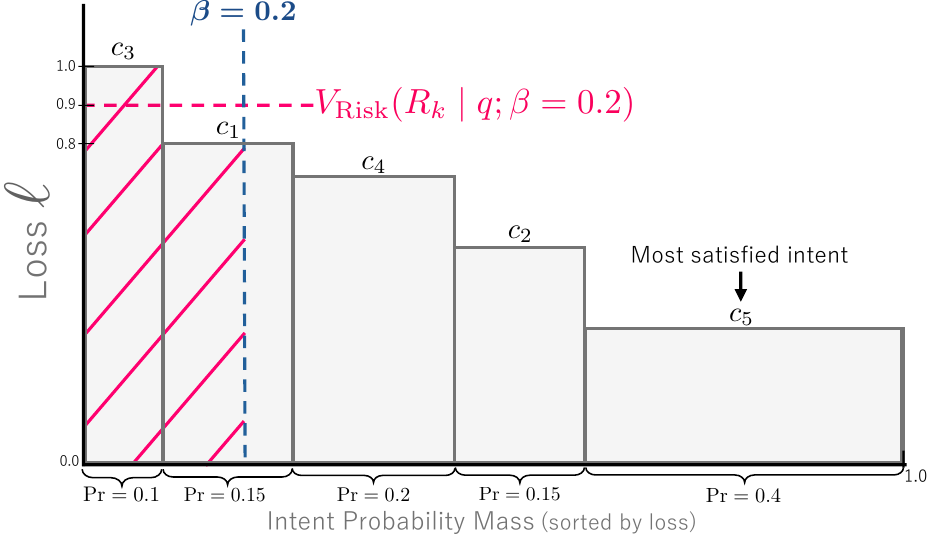}
    \vspace{-10pt}
    \caption{Illustration of VRisk. Bar height shows per-intent loss $\loss{\rankingk}{\query}{\intent}$ and the bar width shows the intent probability $\prob{\intent|\query}$. 5 intents are sorted by loss (worst $\to$ best). VRisk takes the average of \ \area \ , the worst-$\beta$ intent probability mass.}
    \label{fig:vrisk}
    \vspace{-5pt}
\end{figure}

Following previous query-level risk studies~\citep{wang2012risk, dincer2016risk}, $\vtgt{\query}{\intent}$ may also be a value of any baseline algorithm or some heuristic threshold that we want to ensure, which we investigate in Section~\ref{sec: experiments}.

Now, to derive VRisk, let $\zeta$ be the $\beta$‑fraction of the loss distribution, i.e. $\prob{\loss{\rankingk}{\query}{\intent}>\zeta}\le\beta.$ Smaller $\beta$ places more weight on rarer, worse‑served intents. Then, VRisk is the expected loss of the worst $\beta$-fraction of intents, denoted
\begin{equation}
    \vrisk{\rankingk}{\query}{\beta}=\condexpect{\loss{\rankingk}{\query}{\intent}}{\loss{\rankingk}{\query}{\intent}\ge\zeta}.
    \label{eq: cvar raw}
\end{equation}
Smaller VRisk is more robust because it reduces tail loss. Figure~\ref{fig:vrisk} illustrates VRisk when $\beta=0.20$.

Since Eq.~\eqref{eq: cvar raw} is difficult to optimize directly, following the Rockafellar-Uryasev formulation~\citep{Rockafellar2000OptimizationOC}, we express our VRisk metric as
\begin{equation}
\vrisk{\rankingk}{\query}{\beta}=\min_{\zeta\in\mathbb{R}}\left[\zeta+\frac{1}{\beta}\sum_{\intent \in \intentsq{\query}}\condprob{\intent}{\query}\,\pospart{\loss{\rankingk}{\query}{\intent}-\zeta}\right].
\end{equation}
VRisk can tell us both the risk of the ranking and the guarantees, as explained in the example below.
\begin{example}
    If $\, \vrisk{\rankingk}{\query}{\beta}=1.0$ at $\beta=0.05$, we obtain
    \begin{itemize}
        \item Risk: The worst 5\% of intents have an expected loss of 1.0,
        \item Guarantee\footnote{Even better, 95\% of the intents have loss $\leq \zeta\leq 1.0$}: 95\% of the intents have the loss always smaller than 1.0.
    \end{itemize}
\end{example}

\subsection{Property of VRisk}
\label{sec: property}
Our VRisk metric is controllable and generalizable. Specifically, $\beta\in (0, 1]$ is a tunable parameter that controls which worst fractions we would like to care about. An interesting property is described following proposition.
\begin{proposition}
\label{prop:beta1}
When $\beta = 1$, VRisk reduces to the expected loss
\begin{equation}
\vrisk{\rankingk}{\query}{\beta=1} = \sum_{\intent \in \intentsq{\query}} \condprob{\intent}{\query} \cdot \loss{\rankingk}{\query}{\intent}.
\end{equation}
\end{proposition}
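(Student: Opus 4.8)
The plan is to start from the Rockafellar--Uryasev formulation of $\vrisk{\rankingk}{\query}{\beta}$, set $\beta = 1$, and show that the resulting minimization over $\zeta$ collapses to the expected loss. Writing $g(\zeta) = \zeta + \sum_{\intent \in \intentsq{\query}} \condprob{\intent}{\query}\,\pospart{\loss{\rankingk}{\query}{\intent} - \zeta}$ for the objective at $\beta = 1$, the goal is to establish that $\min_{\zeta \in \mathbb{R}} g(\zeta) = \sum_{\intent \in \intentsq{\query}} \condprob{\intent}{\query}\,\loss{\rankingk}{\query}{\intent}$, which is the claimed right-hand side.

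The key step is the elementary bound $\pospart{x} \ge x$, valid for every real $x$. Applying it termwise gives $g(\zeta) \ge \zeta + \sum_{\intent} \condprob{\intent}{\query}\bigl(\loss{\rankingk}{\query}{\intent} - \zeta\bigr)$. Because the intent probabilities sum to one under the mutual-exclusivity assumption from the preliminaries, the two $\zeta$ contributions cancel, leaving $g(\zeta) \ge \sum_{\intent} \condprob{\intent}{\query}\,\loss{\rankingk}{\query}{\intent}$ uniformly in $\zeta$. This supplies the lower bound equal to the target value.

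It then remains to show the bound is attained. Since each per-intent loss is nonnegative (being itself a $\pospart{\cdot}$ expression by Eq.~\eqref{eq:loss}), choosing any $\zeta \le \min_{\intent} \loss{\rankingk}{\query}{\intent}$ — for concreteness $\zeta = 0$ — makes every argument $\loss{\rankingk}{\query}{\intent} - \zeta$ nonnegative, so that $\pospart{\loss{\rankingk}{\query}{\intent} - \zeta} = \loss{\rankingk}{\query}{\intent} - \zeta$ and the inequality above becomes an equality. Hence the infimum is achieved and equals the expected loss, which proves the proposition.

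I do not anticipate a genuine obstacle: the entire argument rests on the single inequality $\pospart{x} \ge x$ together with $\sum_{\intent} \condprob{\intent}{\query} = 1$. The one point deserving a sentence of care is confirming that a minimizer exists rather than the value being an unattained infimum; this is settled by exhibiting the explicit minimizer $\zeta = 0$, or equivalently by noting that $g$ is convex and piecewise linear, constant on $(-\infty,\, \min_{\intent} \loss{\rankingk}{\query}{\intent}]$ and nondecreasing thereafter, since its right-derivative $1 - \sum_{\intent} \condprob{\intent}{\query}\,\mathbbm{1}[\loss{\rankingk}{\query}{\intent} > \zeta]$ is nonnegative everywhere.
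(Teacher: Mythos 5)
Your proof is correct and complete: starting from the Rockafellar--Uryasev form at $\beta=1$, the termwise bound $\pospart{x}\ge x$ plus $\sum_{\intent}\condprob{\intent}{\query}=1$ gives the lower bound, and $\zeta=0$ attains it because every per-intent loss is itself nonnegative. The paper states this proposition without proof, treating it as immediate from exactly this collapse of the minimization over $\zeta$, so your argument simply fills in the standard reasoning the authors rely on.
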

Thus, our framework generalizes the average intent-weighted optimization, while enabling control of tail risk via $\beta$. Specifically, a smaller $\beta$ means a \textit{strictly} safer system.

In short, VRisk converts diversification into an intuitive \textbf{risk-minimization} problem driven by a single, intuitive parameter~$\beta$. Using VRisk, we are able to explicitly measure the robustness of a ranking, unlike existing metrics that measure the average.

\paragraph{Why not minimax?}
The minimax criterion, which maximizes the value of the single worst–served case, is typically more common in the IR literature~\citep{Diaz_2024, memarrast2021fairnessrobustlearningrank, Wang_2017irgan, diaz2024recallrobustnesslexicographicevaluation}. In our case, minimax solves
\begin{equation}
\label{eq:minimax}
\min_{\rankingk}\max_{\intent\in\intentsq{\query}} \loss{\rankingk}{\query}{\intent}.
\end{equation}
In Figure~\ref{fig:vrisk}, the minimax task would be to minimize the loss of only the left-most intent (i.e., intent with the most loss).

However, minimax is not suited for our problem, as it is not tunable. Minimax must optimize for the worst intent, even if that intent only occurs at a $0.0001\%$ chance. In reality, we should not lower the average utility just to fulfill such intent. In contrast, VRisk looks at the worst \textit{fraction} of the intents, where $\beta$ controls the size of the fraction, so it addresses a $0.0001\%$ intent \textit{only if} it lowers fraction loss the most. Moreover, \textit{VRisk subsumes minimax} as the special case of $\beta \rightarrow 0$.

\section{Optimization of VRisk: VRisker}
\label{sec:optimization}
The VRisk objective gives us an explicit target of \emph{minimizing tail
risk}, but finding the global optimum is computationally intractable, as shown in the following proposition.
\begin{proposition}[NP-hardness]\label{prop:nphard}
For variable $\topk$ and any $\beta\in(0,1]$, minimizing VRisk
\begin{equation*}
    \min_{\rankingk\subseteq\corpus,\,|\rankingk|=\topk} \; \vrisk{\rankingk}{\query}{\beta}
\end{equation*}
is NP-hard. Proof in Appendix~\ref{app:nphard}.
\end{proposition}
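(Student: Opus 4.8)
The plan is to reduce from \textsc{Set Cover} (equivalently \textsc{Maximum Coverage}), a classic NP-complete problem, exploiting the fact that VRisk rewards \emph{covering} poorly served intents rather than further boosting already-served ones. Given a universe $U=\{1,\dots,\numintents\}$, a family of sets $S_1,\dots,S_N\subseteq U$, and a budget $\topk$, I would build a query $\query$ whose intent set $\intentsq{\query}=\{\intenti{1},\dots,\intenti{\numintents}\}$ has one intent per element, a corpus $\corpus=\{\doc_1,\dots,\doc_N\}$ with one document per set, and binary relevance $\relevanceQI{\doc_j}{\query}{\intenti{i}}=1$ iff $i\in S_j$ (and $0$ otherwise), with uniform intent probabilities $\condprob{\intenti{i}}{\query}=1/\numintents$. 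Using average relevance as the base metric, $\valuegiven{\rankingk}{\query}{\intenti{i}}=n_i/\topk$, where $n_i$ counts the selected documents relevant to $\intenti{i}$. The crucial gadget is the \emph{target}: rather than the oracle, I pick the admissible threshold $\vtgt{\query}{\intenti{i}}=1/\topk$, which the framework explicitly permits. Then $\loss{\rankingk}{\query}{\intenti{i}}=\pospart{1/\topk-n_i/\topk}$ equals $0$ when intent $\intenti{i}$ is covered ($n_i\ge 1$) and exactly $1/\topk$ when it is uncovered ($n_i=0$); the per-intent loss is thus a clean \emph{coverage indicator}.

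With binary losses, the CVaR tail is easy to evaluate for \emph{every} $\beta\in(0,1]$. Let $p$ be the probability mass of uncovered intents under a chosen $\rankingk$. The worst $\beta$-fraction always fills up with the loss-$1/\topk$ (uncovered) intents first, so a short computation gives $\vrisk{\rankingk}{\query}{\beta}=1/\topk$ when $p\ge\beta$ and $\vrisk{\rankingk}{\query}{\beta}=p/(\topk\,\beta)$ when $p<\beta$. In both regimes $\vrisk{\rankingk}{\query}{\beta}>0$ iff $p>0$, i.e.\ iff some intent is uncovered, whereas $\vrisk{\rankingk}{\query}{\beta}=0$ iff every intent is covered. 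Hence $\min_{\rankingk}\vrisk{\rankingk}{\query}{\beta}=0$ if and only if the $\topk$ chosen documents cover all of $U$, which is exactly a YES instance of \textsc{Set Cover}. This equivalence is independent of $\beta$, so deciding whether the minimum VRisk equals $0$ is NP-complete and the minimization is NP-hard uniformly for all $\beta\in(0,1]$.

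To finish, I would note that the reduction is polynomial (the sizes $\numintents,N$ and all numbers are given directly), that a cover of size $\le\topk$ is turned into an exactly-$\topk$ ranking by padding with arbitrary extra documents (which can only keep losses at $0$), and conversely any size-$\topk$ ranking with $\vrisk{\rankingk}{\query}{\beta}=0$ yields a cover of size $\le\topk$; if $N<\topk$ one pads $\corpus$ with all-zero-relevance documents. Since $\topk$ is part of the input, this matches the ``variable $\topk$'' hypothesis (for constant $\topk$ the problem is polynomial by brute force over the $\binom{N}{\topk}$ subsets, so variability is essential). The step needing the most care is the choice of target: under the \emph{oracle} target with a linear base metric, Proposition~\ref{prop:beta1} shows that the $\beta=1$ case collapses to maximizing $\vstdname$ and is solvable by the naive sort, so the reduction must use the threshold target (or a saturating nonlinear base metric) to make the per-intent loss a genuine coverage indicator and thereby keep VRisk bounded away from $0$ across the full range of $\beta$, including $\beta=1$.
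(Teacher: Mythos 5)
Your construction is essentially the paper's gadget --- one intent per universe element, one document per set, binary relevance $\relevanceQI{\doc_j}{\query}{\intenti{i}}=\mathbf{1}[i\in S_j]$, and the threshold target $\vtgt{\query}{\intenti{i}}=1/\topk$ so that the per-intent loss becomes a scaled coverage indicator --- and your CVaR computation for two-point losses (value $1/\topk$ if the uncovered mass $p\ge\beta$, value $p/(\topk\beta)$ if $p<\beta$) is correct. The genuine difference is in how the reduction is closed. The paper reduces from \emph{Weighted} Max-$\topk$-Cover, fixes $\beta=1$, and shows that $\vrisk{R}{\query}{1}=\tfrac{1}{\topk}\bigl(1-\tfrac{1}{W}\sum_{u\in\cup_{d\in R}S(d)}w(u)\bigr)$, i.e., an exact affine correspondence between the VRisk objective and covered weight; this gives hardness of the optimization problem (and would transfer inapproximability), but as written it only instantiates $\beta=1$. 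You instead reduce from the Set Cover decision problem with uniform weights and observe that $\min_R\vrisk{R}{\query}{\beta}=0$ iff a size-$\topk$ cover exists, an equivalence that holds \emph{uniformly for every} $\beta\in(0,1]$ --- which actually matches the literal quantifier in the proposition better than the paper's own proof does. Your padding remarks and the observation that the problem is polynomial for constant $\topk$ mirror the paper's footnote. One small caution: your closing aside that the oracle target at $\beta=1$ ``collapses to the naive sort'' is correct for linear base metrics, but it does not undermine either reduction, since the proposition treats $\vtgt{\query}{\intent}$ as part of the instance; it is worth stating that explicitly rather than leaving it as an implicit assumption about which targets are admissible inputs.
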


Due to the NP-hardness of the optimization problem, we propose an efficient ranking algorithm called \textbf{VRisk} \textbf{E}fficient \textbf{R}anker (\textbf{\vrisker}), shown in Algorithm~\ref{alg: vrisker}. \vrisker{} is a greedy algorithm that picks documents iteratively from the top position. At each step, it adds the document that most reduces VRisk. When there is a tie, it adds the document that maximizes the IW metric $\viw{R}{\query}$.
\begin{algorithm}[t]
\caption{\vrisker}
\label{alg: vrisker}
\textbf{Require:} Query $\query$, candidates $\corpus$, length $\topk$, risk level $\beta$
\begin{algorithmic}[1]
\State Ranking $R \leftarrow \varnothing$
\For{each rank $i = 1, \dots, \topk$}
    \State \textcolor{gray}{// select document that minimizes VRisk}
    \State $\doc^* \leftarrow \argmin_{\doc \in \corpus \setminus R} \vrisk{R \cup \{\doc\}}{\query}{\beta}$
    \State Tie-break by maximizing $\viw{R \cup \{\doc\}}{\query}$
    \State \textcolor{gray}{// append document to rank $i$}
    \State $R \leftarrow R \cup \{\doc^*\}$
\EndFor
\State \textbf{return} Ranking $R$
\end{algorithmic}
\end{algorithm}

VRisker is practical in terms of complexity. Let $\numintents=|\intentsq{\query}|$ and $\numdocs=|\corpus|$. Each risk evaluation involves $O(\numintents)$ arithmetic operations. The run time is $O(\topk^2 \numdocs \numintents)$, but drops to $O(\topk \numdocs \numintents)$ with incremental updates, which is identical to the speed of prior greedy diversification algorithms~\citep{agrawal2009diversifysearch, chapelle_intent-based_2011, steck2018cali, santos2010xquad}. Empirical comparison is in Figure~\ref{fig:runtime}.

\subsection{Approximation Guarantee of VRisker}
\label{sec:guarantee}
The \vrisker{} algorithm has strong theoretical guarantees. When the base metric $\valuegiven{\cdot}{\query}{\intent}$ is \emph{modular} (e.g., average relevance, \precisionat{\topk}), the per‑intent loss $\loss{R}{\query}{\intent}$ is monotone non‑increasing, and the \emph{risk‑reduction} function
\begin{equation}
    \riskreduction{R}=\vrisk{\varnothing}{\query}{\beta}\;-\;\vrisk{R}{\query}{\beta}
\end{equation}
is \emph{monotone submodular}.\footnote{Submodularity follows because
(i) adding a document can only lower each hinge‑loss term, and
(ii) the convex combination inside CVaR is linear in these losses.} Given the submodularity, we are able to provide an approximation guarantee on \vrisker, in Theorem~\ref{thm:approx}.
\begin{theorem}[$(1-\!1/e)$ Optimality Guarantee]\label{thm:approx}
For modular base metric $\valuemetric$, let $\rankingkopt$ be the optimal length‑$\topk$ ranking.  
Then, \vrisker{} returns $\rankingk$ such that
\begin{equation}
    \riskreduction{\rankingk}\;\ge\;\bigl(1-\tfrac1e\bigr)\,\riskreduction{\rankingkopt},
\end{equation}
i.e., it captures at least $63\%$ of the optimal risk drop. Proof in Appendix~\ref{app:approx}.
\end{theorem}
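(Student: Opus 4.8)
The plan is to recognize the claim as an instance of the classical Nemhauser--Wolsey--Fisher guarantee for greedily maximizing a normalized, monotone, submodular set function under a cardinality constraint, applied to the risk-reduction function $\riskreduction{\cdot}$. First I would check the three structural prerequisites. Normalization is immediate, since $\riskreduction{\varnothing}=\vrisk{\varnothing}{\query}{\beta}-\vrisk{\varnothing}{\query}{\beta}=0$. Monotonicity follows from the modular assumption: each per-intent value $\valuegiven{R}{\query}{\intent}$ is non-decreasing in $R$, so each loss $\loss{R}{\query}{\intent}$ is non-increasing; because every inner objective in the Rockafellar--Uryasev form is non-increasing in the loss vector, so is their pointwise minimum $\vrisk{R}{\query}{\beta}$, and therefore $\riskreduction{\cdot}$ is monotone non-decreasing. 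Submodularity is the remaining ingredient, recorded in the footnote and discussed below.

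Next I would tie the algorithm to the abstract problem. Because $\vrisk{\varnothing}{\query}{\beta}$ is a constant, the marginal gain satisfies $\riskreduction{R\cup\{\doc\}}-\riskreduction{R}=\vrisk{R}{\query}{\beta}-\vrisk{R\cup\{\doc\}}{\query}{\beta}$, so the selection rule $\argmin_{\doc}\vrisk{R\cup\{\doc\}}{\query}{\beta}$ in Algorithm~\ref{alg: vrisker} is exactly a greedy maximizer of the marginal gain of $\riskreduction{\cdot}$; the $\viw{\cdot}{\query}$ tie-break only disambiguates among documents of equal gain and does not affect the bound. With normalization, monotonicity, and submodularity in hand, I would run the standard argument: writing $R_0=\varnothing, R_1, \dots, R_{\topk}=\rankingk$ for the greedy iterates, submodularity and monotonicity give
\begin{equation*}
\riskreduction{\rankingkopt}-\riskreduction{R_i}\;\le\;\sum_{\doc\in\rankingkopt\setminus R_i}\bigl(\riskreduction{R_i\cup\{\doc\}}-\riskreduction{R_i}\bigr)\;\le\;\topk\,\bigl(\riskreduction{R_{i+1}}-\riskreduction{R_i}\bigr),
\end{equation*}
and the usual recursion yields $\riskreduction{\rankingk}\ge\bigl(1-(1-1/\topk)^{\topk}\bigr)\riskreduction{\rankingkopt}\ge(1-1/e)\riskreduction{\rankingkopt}$.

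The main obstacle is making submodularity fully rigorous through the inner minimization, since $\vrisk{R}{\query}{\beta}$ is a pointwise minimum over $\zeta$ of inner objectives $g_\zeta(R)$, and a minimum of supermodular functions need not be supermodular. For each fixed $\zeta\ge 0$ the situation is clean: the nested hinge collapses to $\pospart{\loss{R}{\query}{\intent}-\zeta}=\pospart{(\vtgt{\query}{\intent}-\zeta)-\valuegiven{R}{\query}{\intent}}$, a convex non-increasing function of the modular quantity $\valuegiven{R}{\query}{\intent}$, whose set increments are non-decreasing; hence each term is supermodular and $g_\zeta$ is supermodular (equivalently, its negation is submodular). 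The delicate point is that the optimizing $\zeta$ shifts as $R$ grows, so the per-$\zeta$ inequality does not transfer to the envelope for free. I would close this either by freezing the optimizing $\zeta^\star$ and using an envelope/Danskin-style comparison of marginal gains, or by proving the diminishing-returns inequality for $\riskreduction{\cdot}$ directly on the tail-expectation form; establishing this is the crux that certifies the footnote's claim for the oracle target and modular base metrics, after which the $(1-1/e)$ bound follows mechanically.
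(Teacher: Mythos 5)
Your overall strategy is the paper's strategy, and you have correctly isolated the one genuinely delicate point: $\vrisk{R}{\query}{\beta}$ is a pointwise minimum over $\zeta$ of supermodular inner objectives $H_\zeta(R)=\zeta+\tfrac1\beta\sum_i \condprob{\intenti{i}}{\query}\pospart{\ell_i(R)-\zeta}$, so submodularity of the envelope $\riskreduction{\cdot}$ does not follow from your per-$\zeta$ computation, and the Nemhauser--Wolsey--Fisher chain you display,
\begin{equation*}
\riskreduction{\rankingkopt}-\riskreduction{R_i}\le\sum_{\doc\in\rankingkopt\setminus R_i}\bigl(\riskreduction{R_i\cup\{\doc\}}-\riskreduction{R_i}\bigr)\le\topk\,\bigl(\riskreduction{R_{i+1}}-\riskreduction{R_i}\bigr),
\end{equation*}
presupposes exactly that missing submodularity. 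As written, your proof is therefore incomplete at its central step: you name two possible repairs but execute neither.

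The paper closes the gap by the first of your two routes, and it is worth seeing that it never proves submodularity of the envelope at all; it proves the per-step progress inequality directly. Fix $\zeta_t\in\arg\min_\zeta H_\zeta(R_t)$ for the current greedy set $R_t$. The sandwich $\vrisk{R_t}{\query}{\beta}=H_{\zeta_t}(R_t)$ and $\vrisk{R_t\cup\{e\}}{\query}{\beta}\le H_{\zeta_t}(R_t\cup\{e\})$ shows that the true marginal drop of VRisk at $R_t$ dominates the fixed-$\zeta_t$ drop $H_{\zeta_t}(R_t)-H_{\zeta_t}(R_t\cup\{e\})$. That fixed-$\zeta$ drop has diminishing returns by precisely your increment formula $\pospart{C_i-s_i(R)}-\pospart{C_i-s_i(R)-v_i(e)}=\max\{0,\min\{v_i(e),C_i-s_i(R)\}\}$, so averaging over $e$ in the optimal set $O$ and telescoping gives $\max_{e}\bigl[H_{\zeta_t}(R_t)-H_{\zeta_t}(R_t\cup\{e\})\bigr]\ge\tfrac1\topk\bigl(H_{\zeta_t}(R_t)-H_{\zeta_t}(R_t\cup O)\bigr)$; monotonicity of $H_{\zeta_t}$ and $\vrisk{O}{\query}{\beta}\le H_{\zeta_t}(O)$ then yield $\vrisk{R_t}{\query}{\beta}-\vrisk{R_{t+1}}{\query}{\beta}\ge\tfrac1\topk\bigl(\vrisk{R_t}{\query}{\beta}-\vrisk{O}{\query}{\beta}\bigr)$, and the usual $(1-1/\topk)^{\topk}\le e^{-1}$ recursion finishes. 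So your diagnosis and your first proposed repair are exactly right; to make the argument rigorous you must replace the marginals of $\riskreduction{\cdot}$ in your displayed chain with the marginals of $H_{\zeta_t}$ and insert the one-line sandwich above, rather than appealing to submodularity of the envelope.
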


\subsection{Guarantee for Non-Modular Metrics}
\label{sec:non-modular-guarantee}
When the base metric $\valuemetric$ is non-modular, such as \ndcg, the risk-reduction function $\riskreduction{R}$ is no longer submodular, so Theorem~\ref{thm:approx} does not hold. However, we can establish a formal approximation guarantee for \vrisker{} using the concept of the \emph{submodularity ratio}~\citep{das2011submodular}. A function has a submodularity ratio $\gamma \in (0, 1]$ if the marginal gain of adding an element to a larger set is at least a $\gamma$-fraction of the marginal gain of adding it to a smaller subset.

For example, for VRisk optimization with \ndcg{} as the base metric, we can show that the risk-reduction function $\riskreduction{R}$ has a data-independent submodularity ratio $\gamma > 0$. Building on \citet{bian2019guarantee}, we prove the following theorem.

\begin{theorem}[nDCG-Risk Approximation]\label{thm:ndcg_approx}
For any query $\query$, risk level $\beta$, and cut-off\ $\topk$, VRisker returns a ranking $\rankingk$ that satisfies
\begin{equation}
    \riskreduction{\rankingk}\;\ge\;\bigl(1-e^{-\gamma}\bigr)\,\riskreduction{\rankingkopt},
\end{equation}
where $\gamma \ge {\ndcgdiscount{\topk}}/{\sum_{i=1}^{\topk}\ndcgdiscount{i}}$ is the submodularity ratio, with $\ndcgdiscount{i} = 1/\log_2(1+i)$ being the \ndcg{} discount at rank $i$. Proof in Appendix~\ref{app:ndcg_approx}.
\end{theorem}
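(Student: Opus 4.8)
The plan is to reduce Theorem~\ref{thm:ndcg_approx} to the classical greedy guarantee for monotone set functions with bounded submodularity ratio, due to \citet{das2011submodular} and \citet{bian2019guarantee}: if $\riskreduction{\cdot}$ is monotone non-decreasing with $\riskreduction{\varnothing}=0$ and has submodularity ratio at least $\gamma$, then its greedy maximizer $\rankingk$ satisfies $\riskreduction{\rankingk}\ge(1-e^{-\gamma})\riskreduction{\rankingkopt}$. Since \vrisker{} is exactly this greedy maximizer of $\riskreduction{\cdot}$ (each step appends the document of largest marginal risk reduction, breaking ties by $\viwname$ only among risk-reduction maximizers, which leaves $\riskreduction{\cdot}$ unchanged), it suffices to establish two facts: (i) $\riskreduction{\cdot}$ is monotone non-decreasing with $\riskreduction{\varnothing}=0$, and (ii) its submodularity ratio is bounded below by $\ndcgdiscount{\topk}/\sum_{i=1}^{\topk}\ndcgdiscount{i}$. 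The substance of the proof is (ii).

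First I would dispatch monotonicity. Appending a document to a ranking adds a non-negative term $\ndcgdiscount{j}\,g_c(\doc)/Z_c$ (discount at the occupied rank $j$ times a position-free gain, normalized by the per-intent $\text{IDCG}$) to each per-intent value $\valuegiven{\rankingk}{\query}{\intent}$, so every value is non-decreasing in $\rankingk$. Because the per-intent loss $\loss{\rankingk}{\query}{\intent}$ is a non-increasing hinge transform of the value and $\vrisk{\rankingk}{\query}{\beta}=\min_{\zeta}[\zeta+\tfrac1\beta\sum_{\intent\in\intentsq{\query}}\condprob{\intent}{\query}\pospart{\loss{\rankingk}{\query}{\intent}-\zeta}]$ is monotone non-decreasing in the vector of losses, $\vrisk{\cdot}{\query}{\beta}$ is non-increasing in $\rankingk$. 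Hence $\riskreduction{\rankingk}=\vrisk{\varnothing}{\query}{\beta}-\vrisk{\rankingk}{\query}{\beta}$ is monotone non-decreasing and vanishes at $\varnothing$.

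The core estimate is the submodularity ratio $\gamma$. The structural fact I would exploit is that the marginal contribution of a document to a per-intent \ndcg{} value is its position-free gain scaled by a discount $\ndcgdiscount{j}$, and that across a length-$\topk$ ranking these discounts range from $\ndcgdiscount{1}=1$ down to $\ndcgdiscount{\topk}$. In the summed form of the submodularity ratio, the numerator aggregates single-insertion marginal gains—each at least the smallest discount $\ndcgdiscount{\topk}$ times the position-free gain—whereas the denominator, the gain of inserting an entire subset, is controlled by the full discount mass $\sum_{i=1}^{\topk}\ndcgdiscount{i}$ (which is also the factor entering the $\text{IDCG}$ normalizer). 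The quotient of these two extremes is precisely $\ndcgdiscount{\topk}/\sum_{i=1}^{\topk}\ndcgdiscount{i}$, a \emph{data-independent} lower bound on $\gamma$ for every intent. I would then argue this per-intent bound is inherited by the aggregate $\riskreduction{\cdot}$, since the hinge loss is a non-expansive monotone map of the value and the CVaR aggregation is a non-negatively weighted combination of the per-intent terms together with the inner minimization over $\zeta$.

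The main obstacle will be exactly this inheritance step, because both the hinge in the loss and the Rockafellar--Uryasev form of CVaR are non-smooth: a marginal value gain can be partially or fully absorbed (when the oracle target is already met, or when the loss lies below the threshold $\zeta$), so marginal risk reductions are \emph{not} simply proportional to marginal value gains. I would handle this by fixing, at each greedy step, the set of active (non-clipped) intents and showing that on this set $\riskreduction{\cdot}$ is a fixed non-negative combination of per-intent values—so the per-intent ratio transfers verbatim—while clipped intents contribute zero to both numerator and denominator and therefore cannot lower the ratio. A secondary subtlety is that \ndcg{} is order-dependent and not a genuine set function; I would resolve this by taking the greedy insertion order as the canonical rank assignment, which is consistent with how \vrisker{} actually constructs the ranking and is precisely the regime in which the discount-based marginal-gain bounds hold. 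With $\gamma\ge\ndcgdiscount{\topk}/\sum_{i=1}^{\topk}\ndcgdiscount{i}$ established, substituting into the \citet{bian2019guarantee} guarantee yields the claimed $(1-e^{-\gamma})$ bound.
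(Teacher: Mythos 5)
Your proposal follows essentially the same route as the paper's proof: establish monotonicity of $\riskreduction{\cdot}$, lower-bound its submodularity ratio by $\ndcgdiscount{\topk}/\sum_{i=1}^{\topk}\ndcgdiscount{i}$ via the ratio of nDCG discounts at the two insertion positions (greedy order taken as the canonical rank assignment), and invoke the \citet{bian2019guarantee}-style greedy guarantee $(1-e^{-\gamma})$. The only real difference is one of emphasis: you explicitly flag the hinge/CVaR clipping and the order-dependence of nDCG as the delicate steps, whereas the paper dispatches the former in a single sentence (``the clipping can only reduce both numerators by the same amount''), so your plan is, if anything, more candid about where the remaining work lies.
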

Theorem~\ref{thm:ndcg_approx} provides a worst-case guarantee for \vrisker's performance. The ratio $\gamma$ depends only on the ranking depth $\topk$, where for typical depths such as $\topk=5, 10, 20$, the lower bound on $\gamma$ is $0.15, 0.09, 0.05$, respectively. While this theoretical bound is looser than when $\valuemetric$ is modular, we show that \vrisker{} is not an arbitrary heuristic. We also examine the optimality of VRisker on non-modular base metrics in Section~\ref{sec: experiments}.

To sum up, \vrisker{} is a theoretically grounded and efficient algorithm that explicitly minimizes the risks of a user receiving a poor ranking.

\section{Experiments}

\begin{figure*}[t]
    \centering
    \includegraphics[width=1.0\linewidth]{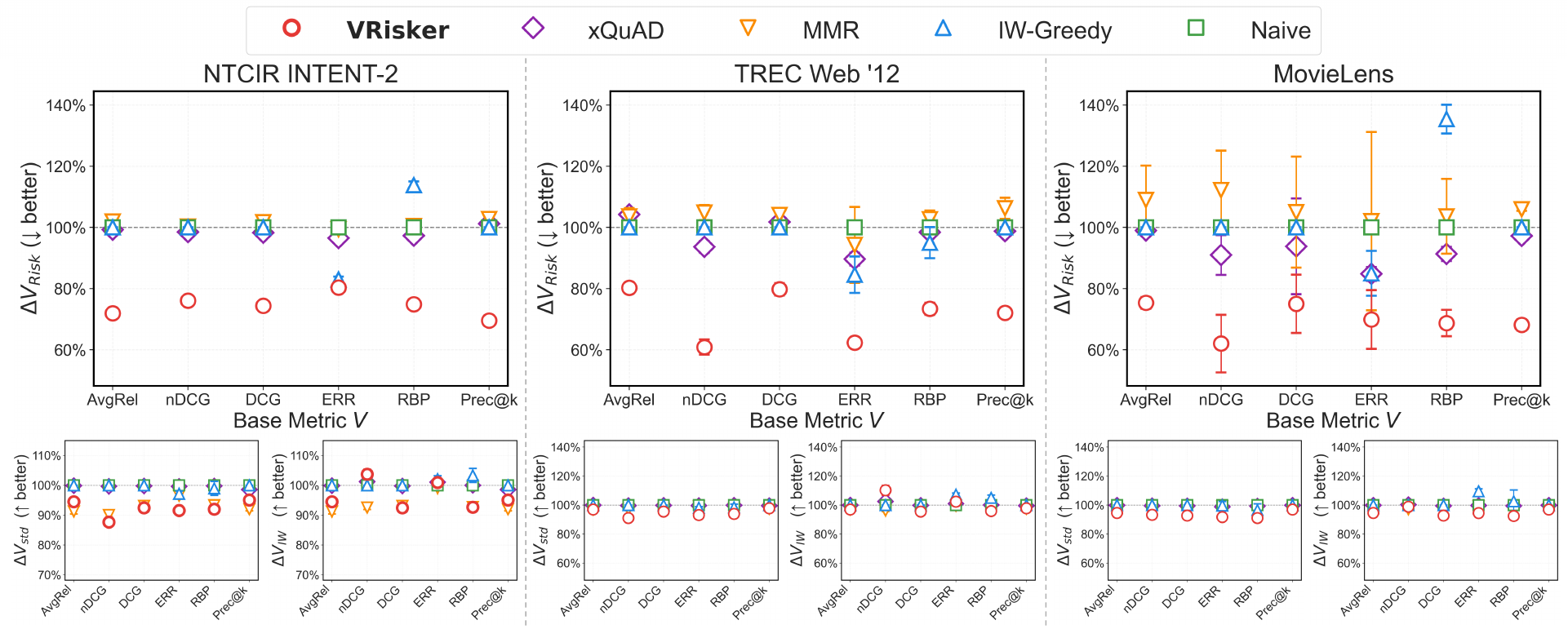}
    \caption{\textit{VRisker is robust across base metrics $V$.} The plots show how different methods perform on varying base metrics $V$, tested on NTCIR INTENT-2 (left), TREC Web 2012 (middle), and MovieLens 32M (right). The top figures evaluate risk, $\Delta V_\text{Risk}$ (smaller is more robust), which measures the expected loss of the worst $\beta$-fraction of intents. The bottom figures evaluate the average performance, $\Delta V_\text{IW}$ and $\Delta V_\text{std}$ (larger is better). All values are relative to Naive = 100\%. ($k=10$, $\beta=0.10$)}
    \label{fig:vary metric}
\end{figure*}

\label{sec: experiments}
In this section, we extensively evaluate our metric and method by comparing with other re-ranking approaches on NTCIR INTENT-2~\citep{sakai2013ntcirintenttask}, TREC Web 2012~\citep{Clarke2012WebTrack}, and on MovieLens 32M datasets~\citep{harper2015movielenslatest}. 


\subsection{Experiment Setup}
\paragraph{Datasets.}
NTCIR INTENT-2 and TREC Web 2012 provide a list of intents for each query, with their intent-specific graded relevance $\relevanceQI{\doc}{\query}{\intent}$. NTCIR INTENT-2 provides 5 relevance grades, and TREC Web 2012 provides 6. While NTCIR INTENT-2 provides their intent probabilities $\condprob{\intent}{\query}$, TREC Web does not provide intent probabilities. However, following \citet{Clarke2012WebTrack}, we let all intents in the TREC Web datasets occur with equal probabilities.

To test generalization beyond search, we also evaluate on MovieLens 32M~\citep{harper2015movielenslatest}, where intent-based diversity is of increasing interest~\citep{Wang2025mood, Vargas2011Intent, Jannach2024Survey, Li2023IntEL, Chang2023Latent, Sun2024LLM, Wang2023IDCL}. We treat each user as a query and each genre as an intent. We select users with more than 200 ratings to ensure sufficient per-user data. Otherwise, different re-ranking algorithms can collapse to similar outputs under extreme sparsity. Following \citet{steck2018cali}, we estimate $\condprob{\intent}{\query}$ from the user’s historical genre proportions and use explicit ratings as raw relevance $\relevanceQ{\doc}{\query}$. For multi-genre items, we define per-intent relevance via a Bayes-consistent allocation
\begin{equation}
    \relevanceQI{\doc}{\query}{\intent}=\mathbbm{1}[c\in C(\doc)]\frac{\relevanceQ{\doc}{\query}}{\sum_{\intentpr\in C(\doc)}\condprob{\intentpr}{\query}},
\end{equation}
where $C(\doc)$ is the set of genre labels for $\doc$ and $\mathbbm{1}$ is an indicator function. This construction ensures Eq.~\eqref{eq: raw relevance}. We adopt an evaluation-only protocol, where no recommender is trained, so as to isolate the effect of the re-ranking objective. Unobserved ratings are treated as non‑relevant for evaluation only.

Additional statistics of the datasets are summarized in Table~\ref{tab: datasets}.

\begin{table}[h]
\centering
\small
\caption{Statistics of the datasets. (Note that the \#Queries denote \#Users in MovieLens.)}
\vspace{-5pt}
\begin{tabular}{lrrr}
\toprule
Dataset & \#Queries & \#Docs & Avg. \#Intents per Query \\
\midrule
INTENT-2 (JP)~\citep{sakai2013ntcirintenttask} & 95  & 5{,}085 & 6.1 \\
TREC Web ’12~\citep{Clarke2012WebTrack}  & 50  & 15{,}200 & 6.0 \\
MovieLens 32M~\citep{harper2015movielenslatest}     & 42{,}902 & 71{,}933 & 8.0 \\
\bottomrule
\end{tabular}
\vspace{-5pt}
\label{tab: datasets}
\end{table}

\paragraph{Experimental Parameters.}
We set $\beta=0.10$ as the default, meaning that we focus on the average loss of the worst $10\%$ of intents when computing $\vriskname$. We also provide experiments where we sweep $\beta$ to different values. The ranking length $\topk$ is also an experimental parameter, with the default length $\topk=10$. The baseline performance $\vtgt{\query}{\intent}$ is also an experimental parameter, but set to the oracle (Eq.~\eqref{eq: oracle}). 

Importantly, we set the base metric $\valuemetric$ to average relevance (which we denote AvgRel) as in the main text. Therefore, for the majority of experiments, we only show results on VRisk and $\vstdname$, as $\vstdname=\viwname$ holds (refer to Section~\ref{sec:iw_not_robust}). However, we also test on \ndcg, \dcg, \err~\citep{chapelle2009err}, \rbp~\citep{moffat2008rbp}, and Precision@k to show the generalizability of VRisk and \vrisker. 

\paragraph{Compared Methods}
We compare \vrisker{} with the following:

\textbf{Naive.} The naive ranking approach naively optimizes the standard metric $\vstdname$ as in Eq.~\eqref{eq: naive}. It does not care about diversity.

 \textbf{IW-Greedy.} \iwgreedy{} is a greedy maximization method that aims to maximize $\viwname$ in Eq.~\eqref{eq:iw}. For linear base metrics (e.g., average relevance, DCG), \iwgreedy{} is \textit{strictly optimal}, exactly matching the Naive method. For non-linear base metrics, IW-Greedy has a $(1-1/e)$ optimality guarantee~\citep{chapelle_intent-based_2011}.

\textbf{xQuAD}~\citep{santos2010xquad}. An intent-based diversification method. $\lambda_\text{xQuAD}$, the weight controls the balance between relevance and diversity, is set to 0.5\footnote{\label{foot:lambda}We sweep this in Appendix~\ref{app: additional} and confirm that it does not change the conclusion.}.

\textbf{MMR}~\citep{carbonell1998mmr}. A diversification method that intends to maximize the coverage without relying on intents. Similarity of documents is computed by measuring the cosine similarity of TF-IDF vectors of document texts for search tasks, and tags and titles for the recommendation tasks. $\lambda_\text{MMR}$, the weight controls the balance between relevance and diversity, is set to 0.5\footref{foot:lambda}.

For all main experiments, we also give comparisons with IA-SELECT~\citep{agrawal2009diversifysearch}, FA*IR~\citep{Zehlike_2017}, and Calibrated Recommendations (CR)~\citep{steck2018cali} in Appendix~\ref{app: additional}.

Unless specified otherwise, instead of reporting the raw metrics, results are shown as a percentage compared to the Naive ranking, so that all curves share the same 100\% center line, making the trade-off between $\vriskname{}$ and average performance (i.e., $\vstdname$ / $\viwname$) more visible. For each $\metricobj\in\{\text{Risk},\text{IW},\text{std}\}$, we set
\begin{equation}
    \deltavobj{\rankingk}=\frac{\vobj{\rankingk}}{\vobj{\naive{\query}}}\times 100.
\end{equation}
For transparency, for each experiment, we also report the results on the raw metrics in Appendix~\ref{app: additional}. Note that $\Delta V_\text{IW}=\Delta V_\text{std}$ for linear bases, so we do not show both in most results.

The values shown are the $\deltavobj{\rankingk}$ averaged on all queries\footnote{Note that we cannot compare with query-level robustness methods~\citep{wang2012risk, dincer2016risk}, as they do not take the average on all queries, so cannot be compared on the same scale.}. The error bars indicate the 95\% confidence interval, calculated by treating each query/user as an independent observation drawn at random from the population. We additionally provide statistical significance testing in Appendix~\ref{app: statistical significance}.

\subsection{Results and Q\&As}
We present our results via the following Q\&As.

\vspace{5pt}
\noindent
\textbf{\underline{Q: Does VRisker work on different base metrics?}}

\noindent
\textbf{A:} VRisker performs robustly across various base metrics, while existing methods are unstable and vulnerable.

Figure~\ref{fig:vary metric} compares the use of different base metrics $V$: AvgRel (average relevance, the default), nDCG, DCG, Expected Reciprocal Rank (ERR)~\citep{chapelle2009err, chapelle_intent-based_2011}, Rank Biased Precision (RBP) on $p=0.8$~\citep{moffat2008rbp}, and Precision@k. For Precision@k, we binarize the relevance labels via threshold $(\text{rel}\text{max}+\text{rel}\text{min})/2$. We compare methods on naive-relative VRisk ($\Delta V_\text{Risk}$) with $\beta=0.10$, standard metric ($\Delta V_\text{std}$), and IW-metric ($\Delta V_\text{IW}$), where Naive is at 100\%.

While VRisker has a $(1-1/e)$ guarantee for modular base metrics (e.g., AvgRel and Precision@$k$), we observe that VRisker is about 20-40\% more robust than naive on other non-modular base metrics as well. Furthermore, VRisker only decreases the standard performance ($\vstdname$) by about 0-10\%. 

Other diversification methods are barely more robust, or even less robust, than the naive baseline. As we have argued in Section~\ref{sec:iw_not_robust}, IW-based diversification methods (i.e., IW-Greedy and xQuAD) are often very similar to the naive baseline, even when the base metrics are not linear. Interestingly, on base metric nDCG, IW-Greedy and the Naive algorithm behave exactly the same on all experimented settings and queries, despite nDCG not being linear (see Eq.~\eqref{eq:ia-equals-naive}). This is because the greedy marginal at each rank reduces to sorting documents by expected gain under intent weights.

\vspace{5pt}
\noindent
\textbf{\underline{Q: How does the risk level, $\beta$, affect the performance?}}

\begin{figure}[h]
    \vspace{-5pt}
    \centering
    \includegraphics[width=1.0\linewidth]{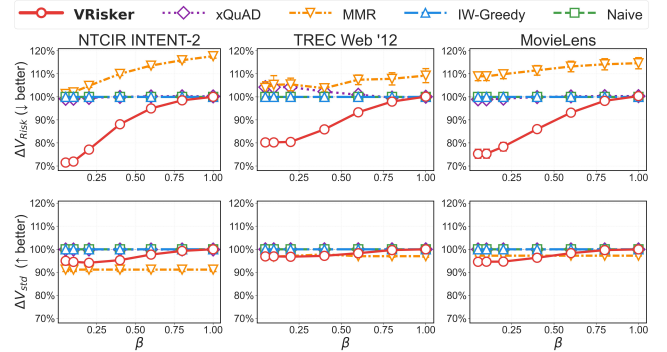}
    \caption{\textit{VRisk/VRisker's pessimism is controllable via $\beta$.} The plots show results on VRisk and $\vstdname$ (= $\viwname$).}
    \label{fig:vary beta}
\end{figure}

\noindent
\textbf{A:} As we make evaluation more pessimistic (smaller $\beta$), average performance decreases, demonstrating the tunability of VRisk \& VRisker.

Figure~\ref{fig:vary beta} shows the performance of different methods when we vary the pessimism, $\beta$. Recall that a smaller $\beta$ focuses on a smaller worst‑case tail of intents. The results demonstrate a clear trade-off between standard performance (i.e., $\vstdname$ / $\viwname$) and VRisk, which is a practical property discussed in Section~\ref{sec: property}. For example, when $\beta=0.05$, VRisker has about 20-30\% less worst-case loss, while sacrificing about 3-5\% in standard performance. In contrast, when $\beta=0.8$, VRisker reduces risk by about 2-3\%, while having lost almost nothing in terms of the standard performance. When the platform should prioritize robustness over standard performance, $\beta$ should be tuned lower.

\vspace{5pt}
\noindent
\textbf{\underline{Q: How does ranking length affect performance?}}
\begin{figure}[h]
    \vspace{-5pt}
    \centering
    \includegraphics[width=1.0\linewidth]{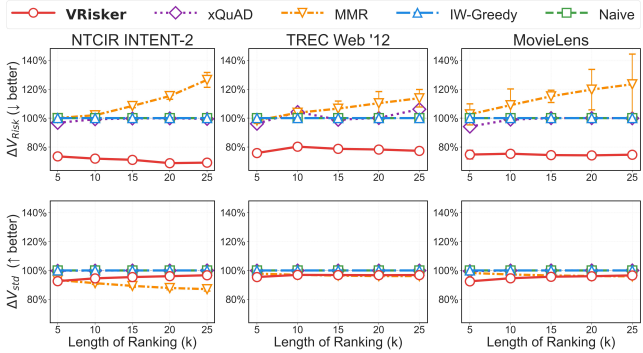}
    \caption{\textit{VRisker is robust to ranking length.}}
    \label{fig:vary k}
\end{figure}

\noindent
\textbf{A:} The average performance of VRisker improves as the ranking length increases, while keeping consistent robustness.

Figure~\ref{fig:vary k} compares the performance on different ranking lengths on the three datasets. VRisker minimizes the worst $\beta$-fraction loss consistently compared to the alternatives, with a 20-30\% decrease. We also observe that in terms of the standard performance ($\vstdname$/$\viwname$), VRisker performs better as the ranking is longer. In the best case, in INTENT-2 $k=25$, we observe $33\%$ reduction in VRisk while sacrificing only $2\%$ in standard performance.

\vspace{5pt}
\noindent
\textbf{\underline{Q: Do we need perfectly accurate intent probabilities?}}

\begin{figure}[h]
    \vspace{-5pt}
    \centering
    \includegraphics[width=1.0\linewidth]{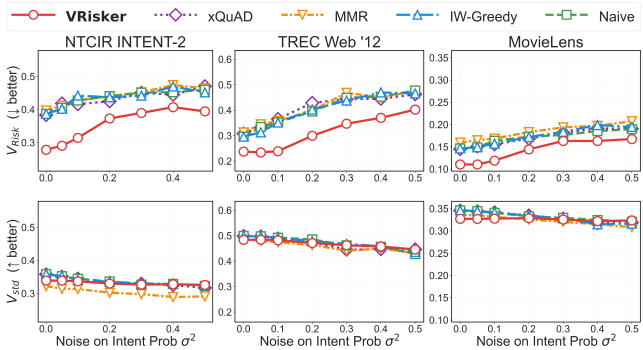}
    \caption{\textit{The results are consistent when intent probabilities with noise added.} VRisk and $\vstdname$ shown with raw values.}
    \label{fig:vary noise}
\end{figure}

\noindent
\textbf{A:} Preferable, but VRisker degrades more gracefully under noise.

Figure~\ref{fig:vary noise} plots the performance of diversification methods on various noise levels. For each intent probability, we perturb each probability as $\Pr(c\mid q)\leftarrow \Pr(c\mid q)+\epsilon_c$, with $\epsilon_c\sim\mathcal N(0,\sigma^2)$. We then clip to $[0,1]$ and renormalize across intents. $\sigma^2$ controls the noise level, where the larger the value, the more noise when running the algorithms. We then evaluate using the true intent probabilities.

We observe that the VRisk increases for all methods, meaning less robustness. This is predictable since noisy intent probabilities trigger noise in both raw relevance calculation and IW metric calculation. Yet, VRisker maintains the lowest risk no matter the noise. Additionally, we observe that in terms of the standard performance, VRisker performs comparatively better as more noise is injected. At $\sigma^2=0.5$, VRisker performs the best on all datasets. This is counterintuitive, but we hypothesize that this is because VRisker satisfies the main intents even with low predicted intent probabilities.

\vspace{5pt}
\noindent
\textbf{\underline{Q: How optimal is VRisker?}}

\begin{figure}[h]
    \vspace{-5pt}
    \centering
    \includegraphics[width=1.0\linewidth]{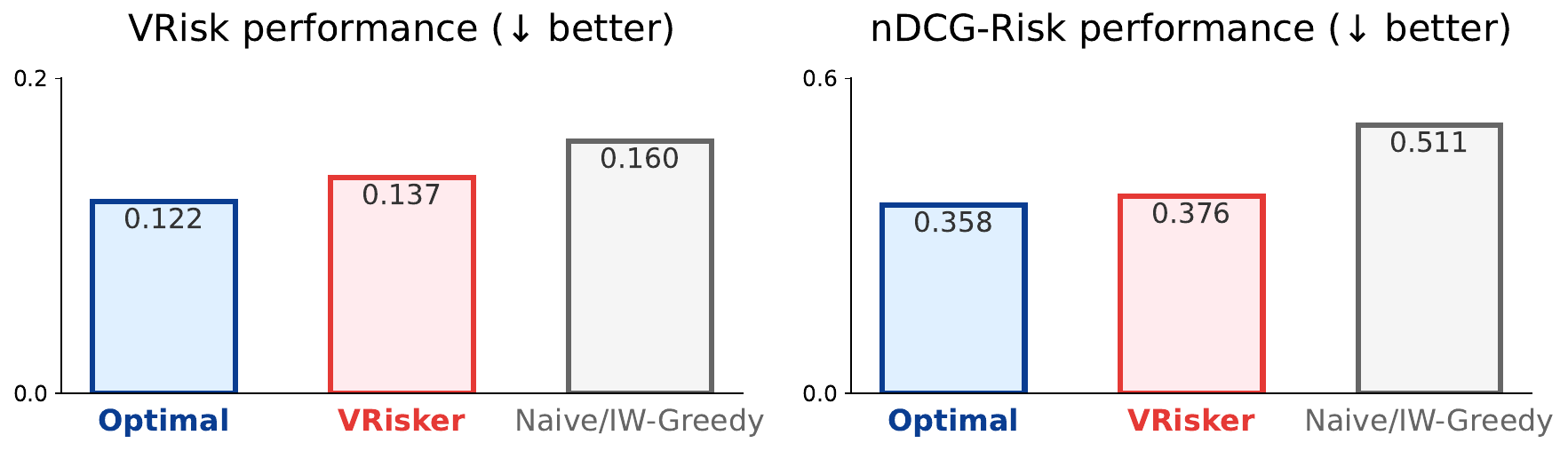}
    \caption{\textit{VRisker is near optimal.} Values are averaged over all three datasets. Results on other bases are in Figure~\ref{fig:app_optim}.}
    \label{fig:optim}
\end{figure}

\noindent
\textbf{A:} VRisker is nearly optimal for various base metrics.

Figure~\ref{fig:optim} shows the comparison of VRisker against the optimal performance, averaged over all three datasets. Here, Optimal is computed exactly per query by solving the VRisk objective as a MILP (PuLP + CBC solver). We show results tested on VRisk (i.e., base metric is average relevance) and nDCG, but we show results on other base metrics in Appendix~\ref{app:optimality}. As discussed in Section~\ref{sec:guarantee}, VRisker is $(1-1/e)$ optimal when the base metric is modular (left-hand chart). As discussed in Section~\ref{sec:non-modular-guarantee}, for non‑modular bases VRisker is not merely a heuristic. As a result, we observe that VRisker is near optimal in all experimented settings, reassuring the robustness of the approach.

\vspace{5pt}
\noindent
\textbf{\underline{Q: How does target level $V_\text{tgt}$ affect the evaluation?}}

\begin{figure}[h]
    \vspace{-5pt}
    \centering
    \includegraphics[width=1.0\linewidth]{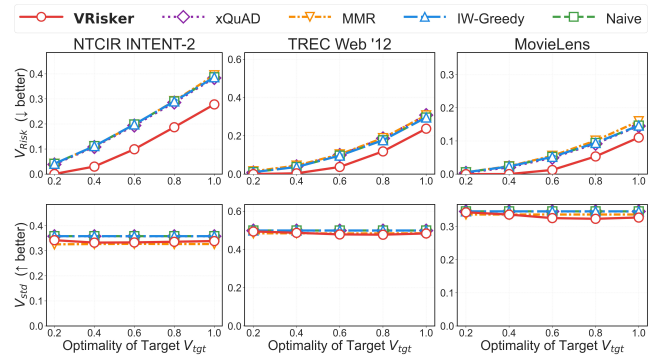}
    \caption{\textit{VRisker works on various baseline target levels, and the optimality of it is a tunable parameter for adjusting safety.} Results shown in raw values.}
    \label{fig:vary tgt}
\end{figure}

\noindent
\textbf{A:} VRisk/VRisker works on various baseline target levels, and target optimality provides a knob to trade off robustness and utility.

In our framework, $V_{\mathrm{tgt}}(q,c)$ can be set to the oracle value or any baseline system. The latter is common in prior IR robustness literature~\citep{wang2012risk, dincer2016risk}, but is now applied within a query.

We test how changing the target level (i.e., baseline system) changes the evaluation and performance. We do this by testing on different optimality of $V_\text{tgt}$, where a smaller value means less optimal. For example, if the optimality is 0.2, this means 20\% of the oracle
\begin{equation}
    \vtgt{\query}{\intent} = 0.2 \times \max_{\rankingkpr} \valuegiven{\rankingkpr}{\query}{\intent}.
\end{equation}
If VRisk=0.0 at target optimality 0.2, this implies that every intent attains at least 20\% of its per‑intent oracle value.

From Figure~\ref{fig:vary tgt}, we observe that VRisk is smaller as the target level is less optimal. This is simply because the target level is more achievable, resulting in lower loss. At optimality 0.2, we observe that VRisker achieves VRisk=0.0, where all intents satisfy at least 20\% of the oracle value. Once perfect VRisk is achieved, VRisker could focus solely on raising the standard performance, because the IW tie-break takes over (see Algorithm~\ref{alg: vrisker}). This can be observed in the bottom figures, where VRisker achieves near-optimal or optimal standard performance at 0.2. Moreover, in the Web'12 and MovieLens datasets, VRisker achieves perfect VRisk and near-optimal standard performance at 0.4 target optimality as well.

This is a clear and interesting trade-off. Lowering the target optimality can assure minimal satisfaction for all intents and also raise the standard performance. On the other hand, higher target optimality makes VRisker strictly safer and makes VRisk more pessimistic. Lastly, another interesting property is that when the target level is the oracle, minimization of loss is exactly the maximization of value. This property is especially relevant in production scenarios where service-level objectives are framed in terms of minimum per-intent quality guarantees. 

By tuning $V_{\mathrm{tgt}}$ (and $\beta$), practitioners can directly express and enforce these guarantees while preserving flexibility for the remainder of the ranking.

\vspace{5pt}
\noindent
\textbf{\underline{Q: How fast is VRisker?}}

\begin{figure}[h]
    \vspace{-5pt}
    \centering
    \includegraphics[width=1.0\linewidth]{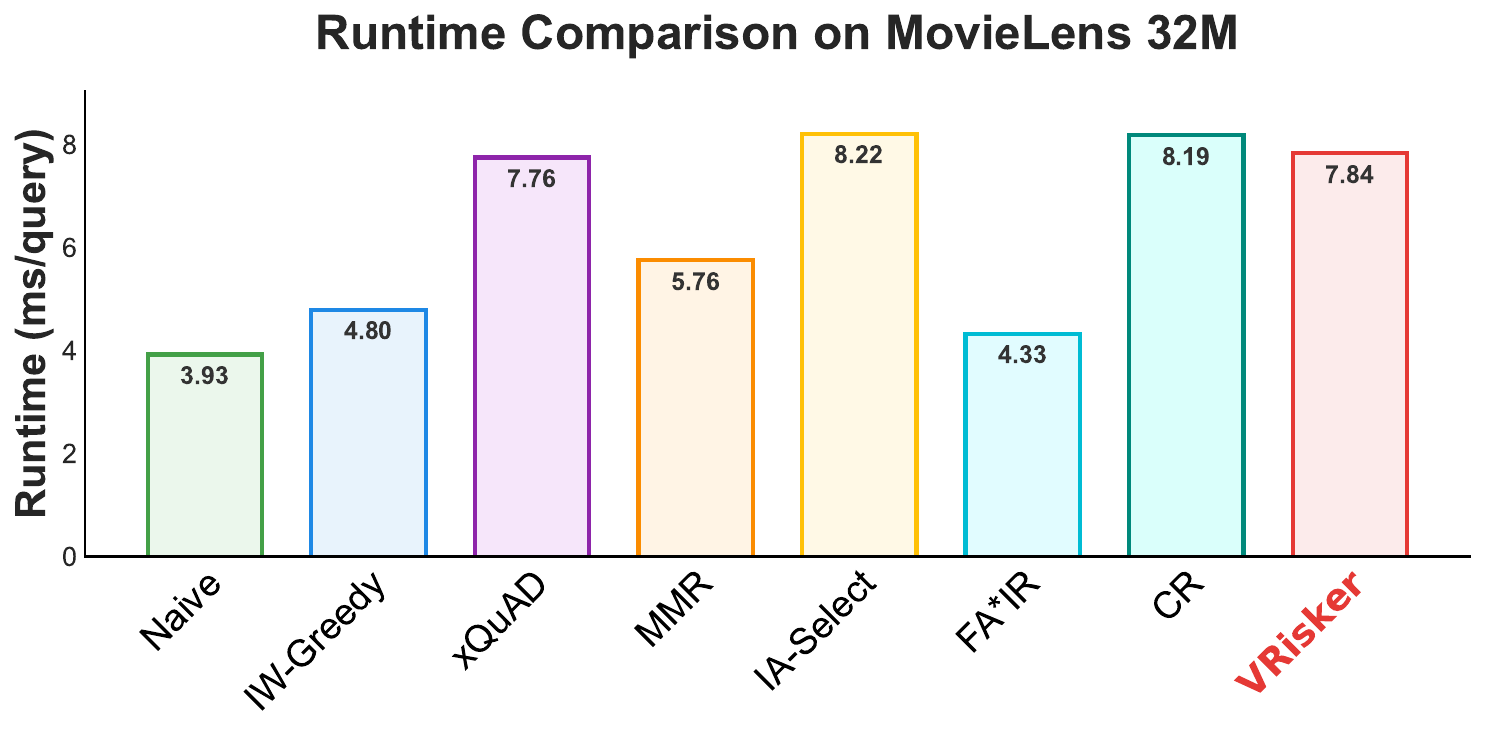}
    \caption{Average runtime on MovieLens 32M (ms/query). Per-query candidate document size is 71,933.}
    \label{fig:runtime}
\end{figure}

\noindent
\textbf{A:} Runtime is comparable to standard diversification methods.

Figure~\ref{fig:runtime} reports per-query runtime\footnote{MacBook Pro (M2, 2022, 16 GB), single-threaded NumPy/BLAS.} to produce a top-$10$ ranking on MovieLens 32M (per‑query candidate set size is $n=71,933$). We compare with three additional baseline methods: IA-SELECT~\citep{agrawal2009diversifysearch}, FA*IR~\citep{Zehlike_2017}, and Calibrated Recommendations (CR)~\citep{steck2018cali} (details of these methods are in Appendix~\ref{app: additional}). 

VRisker achieves 7.84 ms/query, which is within 1\% of xQuAD, and faster than IA-SELECT and Calibrated Recommendation (CR). As expected, greedy diversifiers incur overhead relative to Naive, a simple expected-relevance sort, but the absolute latencies remain very small. These results indicate that VRisker matches the runtime profile of standard greedy diversifiers while delivering its robustness benefits. The complexity of VRisker is shown mathematically in Section~\ref{sec:optimization}.

\vspace{5pt}
\noindent
\textbf{\underline{Q: Do tie-breakers matter?}}

\begin{figure}[h]
    \vspace{-5pt}
    \centering
    \includegraphics[width=1.0\linewidth]{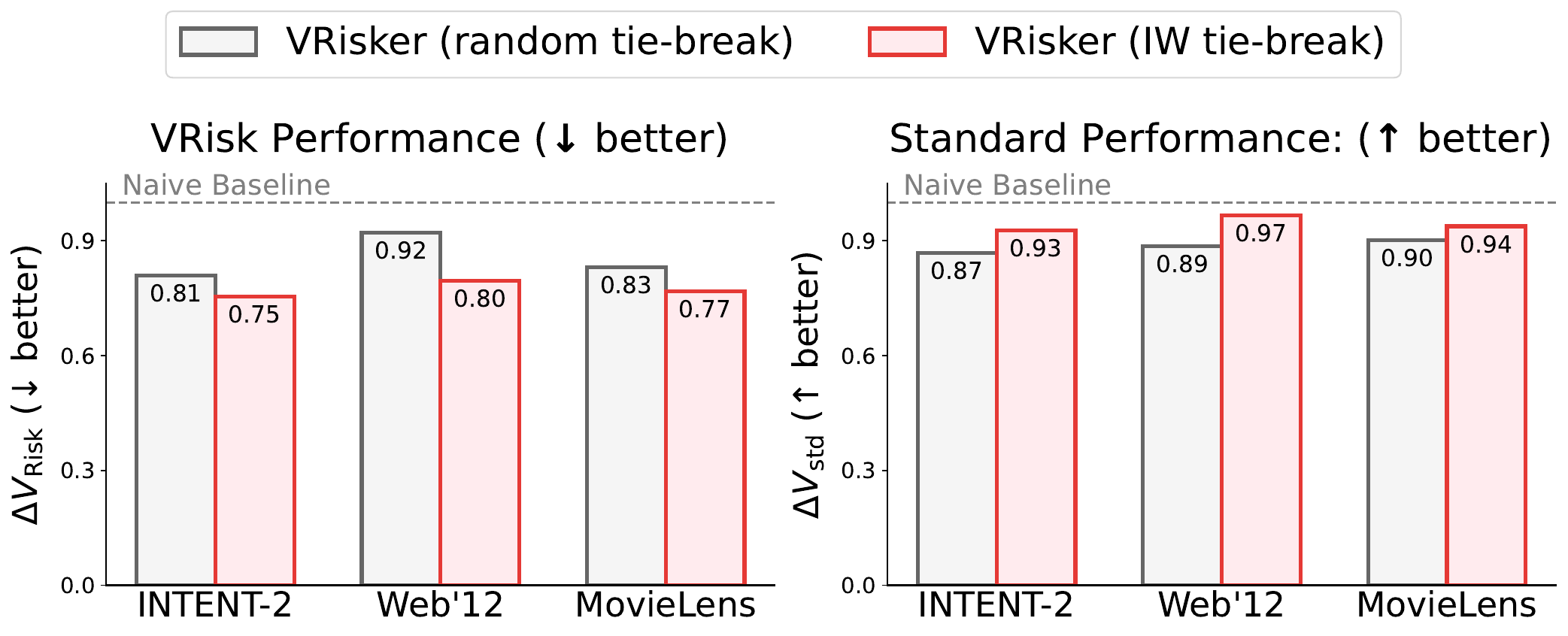}
    \caption{Tie-breaking ablation for VRisker.}
    \label{fig:abl}
\end{figure}

\noindent
\textbf{A:} IW tie-breaker improves both robustness and utility.

When running VRisker, a tie occurs when multiple candidates yield identical VRisk decreases at a position. When there is a tie, VRisker picks the document that maximizes the IW metric (see Algorithm~\ref{alg: vrisker}). Figure~\ref{fig:abl} compares the performance of VRisker with the IW tie‑break versus a random tie‑break. We observe that the IW tie-breaker reduced tail risk by 7–14\% while improving average utility by 4–9\% across INTENT‑2, TREC Web’12, and MovieLens. This is consistent with our objective: once the incremental risk reduction is saturated at a rank, maximizing IW recovers the largest utility without weakening the tail guarantee.

\vspace{5pt}
\noindent
\textbf{\underline{Q: Are existing diversification algorithms robust?}}

\noindent
\textbf{A:} Generally, no.

In most experimental settings (Figures~\ref{fig:vary beta},~\ref{fig:vary k},~\ref{fig:vary noise},~\ref{fig:optim},~\ref{fig:vary tgt}), existing diversification algorithms are no more robust than the naive ranking. This is because the optimization of IW-metrics is identical to the optimization of the standard metrics, as discussed in Section~\ref{sec:iw_not_robust}. In Figure~\ref{fig:vary metric}, we observe that while some (non-linear) base metrics result in less risk, it is not consistent between datasets, and even in the best case, the risk is larger than VRisker. We observe, as \citet{chapelle_intent-based_2011} remark, that IW-Greedy and xQuAD on ERR are fairly robust compared to the other metrics. Thus, if you really need to use an IW-based diversification, we suggest using it with ERR. However, in general, we recommend using VRisker, a tunable, intuitive, baseline-relative, and most importantly robust method.

The appendix provides all results on raw values (i.e., not normalized values) with three additional baseline methods. Figure~\ref{fig:optim} is reported on all base metrics as well. Appendix~\ref{app:lambdasweep} also sweeps the weights of the prior methods to show that the observed results and discussions are not weight-dependent. The code to reproduce the results is provided in \url{https://github.com/RikiyaT/VRisk}. 

\section{Limitations}
\label{app:limitations}
A key limitation, that is shared with prior intent‑aware diversification work~\citep{agrawal2009diversifysearch, sakai2011diversity, clarke2008alphadiversity, Wang2025mood, santos2010xquad}, is the need to estimate intent probabilities $\Pr(c\mid q)$. While recent progress with LLMs makes intent discovery and labeling increasingly tractable~\citep{takehi2025lara, shah2024usinglargelanguagemodels}, such estimates can be biased and typically require calibration against logs or human judgments. Encouragingly, our noise study (Figure~\ref{fig:vary noise}) shows that VRisk remains comparatively robust under perturbed intent distributions.

Our formulation and prior work~\citep{agrawal2009diversifysearch, sakai2011diversity, clarke2008alphadiversity, Wang2025mood, santos2010xquad} also assume mutually exclusive intents per query, which simplifies analysis but may not capture overlapping or hierarchical intents. In addition, VRisk’s behavior depends on two policy parameters: the target level $V_{\mathrm{tgt}}(q,c)$ and the risk level $\beta$. We study both empirically (Fig.~\ref{fig:vary tgt}, Fig.~\ref{fig:vary beta}), but different applications may prefer different settings.

\section{Conclusion}
This paper reframes diversification as \emph{within-query risk minimization}. First, we showed mathematically and empirically that the most common diversification metrics favor majority intents just like standard metrics, prioritizing vulnerable rankings. To address this problem, we introduced \textbf{VRisk}, a CVaR-style, $\beta$-tunable metric that quantifies tail risk. VRisk is intuitive and has various properties that meet practitioner needs. Minimization of VRisk explicitly minimizes the chances of a user failing a search session. To minimize VRisk efficiently, we propose \textbf{VRisker}, a greedy optimizer with a $(1 - 1/e)$ guarantee for modular base metrics and a data-dependent bound for non-modular bases. Empirically, across INTENT-2, TREC Web'12, and MovieLens 32M, VRisker reduced tail risk by up to $33\%$ with only $\sim2\%$ loss in average utility, while classic diversification often matches Naive ranker in robustness. 

For future work, we plan to learn intents jointly, extend to session-level objectives, and integrate the idea of robustness in generative texts like question answering.

\section*{Ethical Considerations}

This work adheres to established ethical standards for research. All evaluations were performed on publicly available datasets containing no personally identifiable information. The proposed methods are intended to enhance user experience by improving diversity and coverage in search and recommendation results.

\bibliography{ref}


\begin{thebibliography}{57}


\ifx \showCODEN    \undefined \def \showCODEN     #1{\unskip}     \fi
\ifx \showDOI      \undefined \def \showDOI       #1{#1}\fi
\ifx \showISBNx    \undefined \def \showISBNx     #1{\unskip}     \fi
\ifx \showISBNxiii \undefined \def \showISBNxiii  #1{\unskip}     \fi
\ifx \showISSN     \undefined \def \showISSN      #1{\unskip}     \fi
\ifx \showLCCN     \undefined \def \showLCCN      #1{\unskip}     \fi
\ifx \shownote     \undefined \def \shownote      #1{#1}          \fi
\ifx \showarticletitle \undefined \def \showarticletitle #1{#1}   \fi
\ifx \showURL      \undefined \def \showURL       {\relax}        \fi
\providecommand\bibfield[2]{#2}
\providecommand\bibinfo[2]{#2}
\providecommand\natexlab[1]{#1}
\providecommand\showeprint[2][]{arXiv:#2}

\bibitem[Agrawal et~al\mbox{.}(2009)]%
        {agrawal2009diversifysearch}
\bibfield{author}{\bibinfo{person}{Rakesh Agrawal}, \bibinfo{person}{Sreenivas Gollapudi}, \bibinfo{person}{Alan Halverson}, {and} \bibinfo{person}{Samuel Ieong}.} \bibinfo{year}{2009}\natexlab{}.
\newblock \showarticletitle{Diversifying search results}. In \bibinfo{booktitle}{\emph{Proceedings of the Second ACM International Conference on Web Search and Data Mining}} (Barcelona, Spain) \emph{(\bibinfo{series}{WSDM '09})}. \bibinfo{publisher}{Association for Computing Machinery}, \bibinfo{address}{New York, NY, USA}, \bibinfo{pages}{5–14}.
\newblock
\showISBNx{9781605583907}
\urldef\tempurl%
\url{https://doi.org/10.1145/1498759.1498766}
\showDOI{\tempurl}


\bibitem[Bian et~al\mbox{.}(2019)]%
        {bian2019guarantee}
\bibfield{author}{\bibinfo{person}{Andrew~An Bian}, \bibinfo{person}{Joachim~M. Buhmann}, \bibinfo{person}{Andreas Krause}, {and} \bibinfo{person}{Sebastian Tschiatschek}.} \bibinfo{year}{2019}\natexlab{}.
\newblock \bibinfo{title}{Guarantees for Greedy Maximization of Non-submodular Functions with Applications}.
\newblock
\showeprint[arxiv]{1703.02100}~[cs.DM]
\urldef\tempurl%
\url{https://arxiv.org/abs/1703.02100}
\showURL{%
\tempurl}


\bibitem[Brandt et~al\mbox{.}(2011)]%
        {brandt2011dynamic}
\bibfield{author}{\bibinfo{person}{Christina Brandt}, \bibinfo{person}{Thorsten Joachims}, \bibinfo{person}{Yisong Yue}, {and} \bibinfo{person}{Jacob Bank}.} \bibinfo{year}{2011}\natexlab{}.
\newblock \showarticletitle{Dynamic ranked retrieval}. In \bibinfo{booktitle}{\emph{Proceedings of the Fourth ACM International Conference on Web Search and Data Mining}} (Hong Kong, China) \emph{(\bibinfo{series}{WSDM '11})}. \bibinfo{publisher}{Association for Computing Machinery}, \bibinfo{address}{New York, NY, USA}, \bibinfo{pages}{247–256}.
\newblock
\showISBNx{9781450304931}
\urldef\tempurl%
\url{https://doi.org/10.1145/1935826.1935872}
\showDOI{\tempurl}


\bibitem[Carbonell and Goldstein(1998)]%
        {carbonell1998mmr}
\bibfield{author}{\bibinfo{person}{Jaime Carbonell} {and} \bibinfo{person}{Jade Goldstein}.} \bibinfo{year}{1998}\natexlab{}.
\newblock \showarticletitle{The use of MMR, diversity-based reranking for reordering documents and producing summaries}. In \bibinfo{booktitle}{\emph{Proceedings of the 21st Annual International ACM SIGIR Conference on Research and Development in Information Retrieval}} (Melbourne, Australia) \emph{(\bibinfo{series}{SIGIR '98})}. \bibinfo{publisher}{Association for Computing Machinery}, \bibinfo{address}{New York, NY, USA}, \bibinfo{pages}{335–336}.
\newblock
\showISBNx{1581130155}
\urldef\tempurl%
\url{https://doi.org/10.1145/290941.291025}
\showDOI{\tempurl}


\bibitem[Chang et~al\mbox{.}(2023)]%
        {Chang2023Latent}
\bibfield{author}{\bibinfo{person}{Bo Chang}, \bibinfo{person}{Alexandros Karatzoglou}, \bibinfo{person}{Yuyan Wang}, \bibinfo{person}{Can Xu}, \bibinfo{person}{Ed~H. Chi}, {and} \bibinfo{person}{Minmin Chen}.} \bibinfo{year}{2023}\natexlab{}.
\newblock \showarticletitle{Latent User Intent Modeling for Sequential Recommenders}. In \bibinfo{booktitle}{\emph{Companion Proceedings of the ACM Web Conference 2023 (WWW '23 Companion)}}. \bibinfo{pages}{427--431}.
\newblock
\urldef\tempurl%
\url{https://doi.org/10.1145/3543873.3584641}
\showDOI{\tempurl}


\bibitem[Chapelle et~al\mbox{.}(2011)]%
        {chapelle_intent-based_2011}
\bibfield{author}{\bibinfo{person}{Olivier Chapelle}, \bibinfo{person}{Shihao Ji}, \bibinfo{person}{Ciya Liao}, \bibinfo{person}{Emre Velipasaoglu}, \bibinfo{person}{Larry Lai}, {and} \bibinfo{person}{Su-Lin Wu}.} \bibinfo{year}{2011}\natexlab{}.
\newblock \showarticletitle{Intent-based diversification of web search results: metrics and algorithms}.
\newblock \bibinfo{journal}{\emph{Information Retrieval}} \bibinfo{volume}{14}, \bibinfo{number}{6} (\bibinfo{date}{Dec.} \bibinfo{year}{2011}), \bibinfo{pages}{572--592}.
\newblock


\bibitem[Chapelle et~al\mbox{.}(2009)]%
        {chapelle2009err}
\bibfield{author}{\bibinfo{person}{Olivier Chapelle}, \bibinfo{person}{Donald Metzler}, \bibinfo{person}{Ya Zhang}, {and} \bibinfo{person}{Pierre Grinspan}.} \bibinfo{year}{2009}\natexlab{}.
\newblock \showarticletitle{Expected reciprocal rank for graded relevance}. In \bibinfo{booktitle}{\emph{Proceedings of the 18th ACM Conference on Information and Knowledge Management}} (Hong Kong, China) \emph{(\bibinfo{series}{CIKM '09})}. \bibinfo{publisher}{Association for Computing Machinery}, \bibinfo{address}{New York, NY, USA}, \bibinfo{pages}{621–630}.
\newblock
\showISBNx{9781605585123}
\urldef\tempurl%
\url{https://doi.org/10.1145/1645953.1646033}
\showDOI{\tempurl}


\bibitem[Clarke et~al\mbox{.}(2011)]%
        {clarke2011noveltydiversity}
\bibfield{author}{\bibinfo{person}{Charles~L.A. Clarke}, \bibinfo{person}{Nick Craswell}, \bibinfo{person}{Ian Soboroff}, {and} \bibinfo{person}{Azin Ashkan}.} \bibinfo{year}{2011}\natexlab{}.
\newblock \showarticletitle{A comparative analysis of cascade measures for novelty and diversity}. In \bibinfo{booktitle}{\emph{Proceedings of the Fourth ACM International Conference on Web Search and Data Mining}} (Hong Kong, China) \emph{(\bibinfo{series}{WSDM '11})}. \bibinfo{publisher}{Association for Computing Machinery}, \bibinfo{address}{New York, NY, USA}, \bibinfo{pages}{75–84}.
\newblock
\showISBNx{9781450304931}
\urldef\tempurl%
\url{https://doi.org/10.1145/1935826.1935847}
\showDOI{\tempurl}


\bibitem[Clarke et~al\mbox{.}(2008)]%
        {clarke2008alphadiversity}
\bibfield{author}{\bibinfo{person}{Charles~L.A. Clarke}, \bibinfo{person}{Maheedhar Kolla}, \bibinfo{person}{Gordon~V. Cormack}, \bibinfo{person}{Olga Vechtomova}, \bibinfo{person}{Azin Ashkan}, \bibinfo{person}{Stefan B\"{u}ttcher}, {and} \bibinfo{person}{Ian MacKinnon}.} \bibinfo{year}{2008}\natexlab{}.
\newblock \showarticletitle{Novelty and diversity in information retrieval evaluation}. In \bibinfo{booktitle}{\emph{Proceedings of the 31st Annual International ACM SIGIR Conference on Research and Development in Information Retrieval}} (Singapore, Singapore) \emph{(\bibinfo{series}{SIGIR '08})}. \bibinfo{publisher}{Association for Computing Machinery}, \bibinfo{address}{New York, NY, USA}, \bibinfo{pages}{659–666}.
\newblock
\showISBNx{9781605581644}
\urldef\tempurl%
\url{https://doi.org/10.1145/1390334.1390446}
\showDOI{\tempurl}


\bibitem[Clarke et~al\mbox{.}(2009)]%
        {clarke2009diversity}
\bibfield{author}{\bibinfo{person}{Charles~L. Clarke}, \bibinfo{person}{Maheedhar Kolla}, {and} \bibinfo{person}{Olga Vechtomova}.} \bibinfo{year}{2009}\natexlab{}.
\newblock \showarticletitle{An Effectiveness Measure for Ambiguous and Underspecified Queries}. In \bibinfo{booktitle}{\emph{Proceedings of the 2nd International Conference on Theory of Information Retrieval: Advances in Information Retrieval Theory}} (Cambridge, UK) \emph{(\bibinfo{series}{ICTIR '09})}. \bibinfo{publisher}{Springer-Verlag}, \bibinfo{address}{Berlin, Heidelberg}, \bibinfo{pages}{188–199}.
\newblock
\showISBNx{9783642044168}
\urldef\tempurl%
\url{https://doi.org/10.1007/978-3-642-04417-5_17}
\showDOI{\tempurl}


\bibitem[Clarke et~al\mbox{.}(2012)]%
        {Clarke2012WebTrack}
\bibfield{author}{\bibinfo{person}{Charles L.~A. Clarke}, \bibinfo{person}{Nick Craswell}, {and} \bibinfo{person}{Ellen~M. Voorhees}.} \bibinfo{year}{2012}\natexlab{}.
\newblock \showarticletitle{Overview of the {TREC} 2012 Web Track}. In \bibinfo{booktitle}{\emph{Proceedings of the Twenty-First Text REtrieval Conference ({TREC} 2012)}} \emph{(\bibinfo{series}{NIST Special Publication}, Vol.~\bibinfo{volume}{500-298})}. \bibinfo{publisher}{National Institute of Standards and Technology (NIST)}, \bibinfo{address}{Gaithersburg, MD, USA}.
\newblock
\urldef\tempurl%
\url{http://trec.nist.gov/pubs/trec21/papers/WEB12.overview.pdf}
\showURL{%
\tempurl}


\bibitem[Collins-Thompson(2009)]%
        {collins2009risk}
\bibfield{author}{\bibinfo{person}{Kevyn Collins-Thompson}.} \bibinfo{year}{2009}\natexlab{}.
\newblock \showarticletitle{Reducing the risk of query expansion via robust constrained optimization}. In \bibinfo{booktitle}{\emph{Proceedings of the 18th ACM Conference on Information and Knowledge Management}} (Hong Kong, China) \emph{(\bibinfo{series}{CIKM '09})}. \bibinfo{publisher}{Association for Computing Machinery}, \bibinfo{address}{New York, NY, USA}, \bibinfo{pages}{837–846}.
\newblock
\showISBNx{9781605585123}
\urldef\tempurl%
\url{https://doi.org/10.1145/1645953.1646059}
\showDOI{\tempurl}


\bibitem[Dang and Croft(2012)]%
        {pm2dang2012}
\bibfield{author}{\bibinfo{person}{Van Dang} {and} \bibinfo{person}{W.~Bruce Croft}.} \bibinfo{year}{2012}\natexlab{}.
\newblock \showarticletitle{Diversity by proportionality: an election-based approach to search result diversification}. In \bibinfo{booktitle}{\emph{Proceedings of the 35th International ACM SIGIR Conference on Research and Development in Information Retrieval}} (Portland, Oregon, USA) \emph{(\bibinfo{series}{SIGIR '12})}. \bibinfo{publisher}{Association for Computing Machinery}, \bibinfo{address}{New York, NY, USA}, \bibinfo{pages}{65–74}.
\newblock
\showISBNx{9781450314725}
\urldef\tempurl%
\url{https://doi.org/10.1145/2348283.2348296}
\showDOI{\tempurl}


\bibitem[Das and Kempe(2011)]%
        {das2011submodular}
\bibfield{author}{\bibinfo{person}{Abhimanyu Das} {and} \bibinfo{person}{David Kempe}.} \bibinfo{year}{2011}\natexlab{}.
\newblock \bibinfo{title}{Submodular meets Spectral: Greedy Algorithms for Subset Selection, Sparse Approximation and Dictionary Selection}.
\newblock
\showeprint[arxiv]{1102.3975}~[stat.ML]
\urldef\tempurl%
\url{https://arxiv.org/abs/1102.3975}
\showURL{%
\tempurl}


\bibitem[Diaz(2024)]%
        {Diaz_2024}
\bibfield{author}{\bibinfo{person}{Fernando Diaz}.} \bibinfo{year}{2024}\natexlab{}.
\newblock \showarticletitle{Pessimistic Evaluation}. In \bibinfo{booktitle}{\emph{Proceedings of the 2024 Annual International ACM SIGIR Conference on Research and Development in Information Retrieval in the Asia Pacific Region}} \emph{(\bibinfo{series}{SIGIR-AP 2024})}. \bibinfo{publisher}{ACM}, \bibinfo{pages}{115–124}.
\newblock
\urldef\tempurl%
\url{https://doi.org/10.1145/3673791.3698428}
\showDOI{\tempurl}


\bibitem[Diaz et~al\mbox{.}(2024)]%
        {diaz2024recallrobustnesslexicographicevaluation}
\bibfield{author}{\bibinfo{person}{Fernando Diaz}, \bibinfo{person}{Michael~D. Ekstrand}, {and} \bibinfo{person}{Bhaskar Mitra}.} \bibinfo{year}{2024}\natexlab{}.
\newblock \bibinfo{title}{Recall, Robustness, and Lexicographic Evaluation}.
\newblock
\showeprint[arxiv]{2302.11370}~[cs.IR]
\urldef\tempurl%
\url{https://arxiv.org/abs/2302.11370}
\showURL{%
\tempurl}


\bibitem[Din\c{c}er et~al\mbox{.}(2014)]%
        {dincer2014risk}
\bibfield{author}{\bibinfo{person}{B.~Taner Din\c{c}er}, \bibinfo{person}{Craig Macdonald}, {and} \bibinfo{person}{Iadh Ounis}.} \bibinfo{year}{2014}\natexlab{}.
\newblock \showarticletitle{Hypothesis testing for the risk-sensitive evaluation of retrieval systems}. In \bibinfo{booktitle}{\emph{Proceedings of the 37th International ACM SIGIR Conference on Research \& Development in Information Retrieval}} (Gold Coast, Queensland, Australia) \emph{(\bibinfo{series}{SIGIR '14})}. \bibinfo{publisher}{Association for Computing Machinery}, \bibinfo{address}{New York, NY, USA}, \bibinfo{pages}{23–32}.
\newblock
\showISBNx{9781450322577}
\urldef\tempurl%
\url{https://doi.org/10.1145/2600428.2609625}
\showDOI{\tempurl}


\bibitem[Din\c{c}er et~al\mbox{.}(2016)]%
        {dincer2016risk}
\bibfield{author}{\bibinfo{person}{B.~Taner Din\c{c}er}, \bibinfo{person}{Craig Macdonald}, {and} \bibinfo{person}{Iadh Ounis}.} \bibinfo{year}{2016}\natexlab{}.
\newblock \showarticletitle{Risk-Sensitive Evaluation and Learning to Rank using Multiple Baselines}. In \bibinfo{booktitle}{\emph{Proceedings of the 39th International ACM SIGIR Conference on Research and Development in Information Retrieval}} (Pisa, Italy) \emph{(\bibinfo{series}{SIGIR '16})}. \bibinfo{publisher}{Association for Computing Machinery}, \bibinfo{address}{New York, NY, USA}, \bibinfo{pages}{483–492}.
\newblock
\showISBNx{9781450340694}
\urldef\tempurl%
\url{https://doi.org/10.1145/2911451.2911511}
\showDOI{\tempurl}


\bibitem[Ge et~al\mbox{.}(2020)]%
        {ge2020learning}
\bibfield{author}{\bibinfo{person}{Yingqiang Ge}, \bibinfo{person}{Shuyuan Xu}, \bibinfo{person}{Shuchang Liu}, \bibinfo{person}{Zuohui Fu}, \bibinfo{person}{Fei Sun}, {and} \bibinfo{person}{Yongfeng Zhang}.} \bibinfo{year}{2020}\natexlab{}.
\newblock \showarticletitle{Learning Personalized Risk Preferences for Recommendation}. In \bibinfo{booktitle}{\emph{Proceedings of the 43rd International ACM SIGIR Conference on Research and Development in Information Retrieval (SIGIR '20)}}. \bibinfo{publisher}{ACM}, \bibinfo{pages}{409--418}.
\newblock


\bibitem[Gupta et~al\mbox{.}(2023)]%
        {gupta2023safe}
\bibfield{author}{\bibinfo{person}{Shashank Gupta}, \bibinfo{person}{Harrie Oosterhuis}, {and} \bibinfo{person}{Maarten de Rijke}.} \bibinfo{year}{2023}\natexlab{}.
\newblock \showarticletitle{Safe Deployment for Counterfactual Learning to Rank with Exposure-Based Risk Minimization}. In \bibinfo{booktitle}{\emph{Proceedings of the 46th International ACM SIGIR Conference on Research and Development in Information Retrieval (SIGIR '23)}}. \bibinfo{publisher}{ACM}, \bibinfo{pages}{249--258}.
\newblock


\bibitem[Harper and Konstan(2015)]%
        {harper2015movielenslatest}
\bibfield{author}{\bibinfo{person}{F.~Maxwell Harper} {and} \bibinfo{person}{Joseph~A. Konstan}.} \bibinfo{year}{2015}\natexlab{}.
\newblock \showarticletitle{The MovieLens Datasets: History and Context}.
\newblock \bibinfo{journal}{\emph{ACM Trans. Interact. Intell. Syst.}} \bibinfo{volume}{5}, \bibinfo{number}{4}, Article \bibinfo{articleno}{19} (\bibinfo{date}{Dec.} \bibinfo{year}{2015}), \bibinfo{numpages}{19}~pages.
\newblock
\showISSN{2160-6455}
\urldef\tempurl%
\url{https://doi.org/10.1145/2827872}
\showDOI{\tempurl}


\bibitem[Huang et~al\mbox{.}(2021)]%
        {huang2021offpolicyriskassessmentcontextual}
\bibfield{author}{\bibinfo{person}{Audrey Huang}, \bibinfo{person}{Liu Leqi}, \bibinfo{person}{Zachary~C. Lipton}, {and} \bibinfo{person}{Kamyar Azizzadenesheli}.} \bibinfo{year}{2021}\natexlab{}.
\newblock \bibinfo{title}{Off-Policy Risk Assessment in Contextual Bandits}.
\newblock
\showeprint[arxiv]{2104.08977}~[cs.LG]
\urldef\tempurl%
\url{https://arxiv.org/abs/2104.08977}
\showURL{%
\tempurl}


\bibitem[Jannach and Zanker(2024)]%
        {Jannach2024Survey}
\bibfield{author}{\bibinfo{person}{Dietmar Jannach} {and} \bibinfo{person}{Markus Zanker}.} \bibinfo{year}{2024}\natexlab{}.
\newblock \showarticletitle{A Survey on Intent-aware Recommender Systems}.
\newblock \bibinfo{journal}{\emph{Comput. Surveys}} (\bibinfo{year}{2024}).
\newblock
\urldef\tempurl%
\url{https://doi.org/10.1145/3700890}
\showDOI{\tempurl}
\newblock
\shownote{arXiv:2406.16350}.


\bibitem[Khabsa et~al\mbox{.}(2016)]%
        {khabsa2016worstintent}
\bibfield{author}{\bibinfo{person}{Madian Khabsa}, \bibinfo{person}{Aidan Crook}, \bibinfo{person}{Ahmed~Hassan Awadallah}, \bibinfo{person}{Imed Zitouni}, \bibinfo{person}{Tasos Anastasakos}, {and} \bibinfo{person}{Kyle Williams}.} \bibinfo{year}{2016}\natexlab{}.
\newblock \showarticletitle{Learning to Account for Good Abandonment in Search Success Metrics}. In \bibinfo{booktitle}{\emph{Proceedings of the 25th ACM International on Conference on Information and Knowledge Management}} (Indianapolis, Indiana, USA) \emph{(\bibinfo{series}{CIKM '16})}. \bibinfo{publisher}{Association for Computing Machinery}, \bibinfo{address}{New York, NY, USA}, \bibinfo{pages}{1893–1896}.
\newblock
\showISBNx{9781450340731}
\urldef\tempurl%
\url{https://doi.org/10.1145/2983323.2983867}
\showDOI{\tempurl}


\bibitem[Kiyohara et~al\mbox{.}(2024)]%
        {kiyohara2024assessingbenchmarkingriskreturntradeoff}
\bibfield{author}{\bibinfo{person}{Haruka Kiyohara}, \bibinfo{person}{Ren Kishimoto}, \bibinfo{person}{Kosuke Kawakami}, \bibinfo{person}{Ken Kobayashi}, \bibinfo{person}{Kazuhide Nakata}, {and} \bibinfo{person}{Yuta Saito}.} \bibinfo{year}{2024}\natexlab{}.
\newblock \bibinfo{title}{Towards Assessing and Benchmarking Risk-Return Tradeoff of Off-Policy Evaluation}.
\newblock
\showeprint[arxiv]{2311.18207}~[cs.LG]
\urldef\tempurl%
\url{https://arxiv.org/abs/2311.18207}
\showURL{%
\tempurl}


\bibitem[Li et~al\mbox{.}(2009)]%
        {li2009worstintent}
\bibfield{author}{\bibinfo{person}{Jane Li}, \bibinfo{person}{Scott Huffman}, {and} \bibinfo{person}{Akihito Tokuda}.} \bibinfo{year}{2009}\natexlab{}.
\newblock \showarticletitle{Good abandonment in mobile and PC internet search}. In \bibinfo{booktitle}{\emph{Proceedings of the 32nd International ACM SIGIR Conference on Research and Development in Information Retrieval}} (Boston, MA, USA) \emph{(\bibinfo{series}{SIGIR '09})}. \bibinfo{publisher}{Association for Computing Machinery}, \bibinfo{address}{New York, NY, USA}, \bibinfo{pages}{43–50}.
\newblock
\showISBNx{9781605584836}
\urldef\tempurl%
\url{https://doi.org/10.1145/1571941.1571951}
\showDOI{\tempurl}


\bibitem[Li et~al\mbox{.}(2023)]%
        {Li2023IntEL}
\bibfield{author}{\bibinfo{person}{Jiayu Li}, \bibinfo{person}{Peijie Sun}, \bibinfo{person}{Zhefan Wang}, \bibinfo{person}{Weizhi Ma}, \bibinfo{person}{Yangkun Li}, \bibinfo{person}{Min Zhang}, \bibinfo{person}{Zhoutian Feng}, {and} \bibinfo{person}{Daiyue Xue}.} \bibinfo{year}{2023}\natexlab{}.
\newblock \showarticletitle{Intent-aware Ranking Ensemble for Personalized Recommendation}. In \bibinfo{booktitle}{\emph{Proceedings of the 46th ACM SIGIR Conference on Research and Development in Information Retrieval (SIGIR '23)}}. \bibinfo{pages}{1004--1013}.
\newblock
\urldef\tempurl%
\url{https://doi.org/10.1145/3539618.3591702}
\showDOI{\tempurl}


\bibitem[Maistro et~al\mbox{.}(2021)]%
        {maistro2021principled}
\bibfield{author}{\bibinfo{person}{Maria Maistro}, \bibinfo{person}{Lucas Chaves~Lima}, \bibinfo{person}{Jakob Grue~Simonsen}, {and} \bibinfo{person}{Christina Lioma}.} \bibinfo{year}{2021}\natexlab{}.
\newblock \showarticletitle{Principled Multi-Aspect Evaluation Measures of Rankings}. In \bibinfo{booktitle}{\emph{Proceedings of the 30th ACM International Conference on Information \& Knowledge Management}} (Virtual Event, Queensland, Australia) \emph{(\bibinfo{series}{CIKM '21})}. \bibinfo{publisher}{Association for Computing Machinery}, \bibinfo{address}{New York, NY, USA}, \bibinfo{pages}{1232–1242}.
\newblock
\showISBNx{9781450384469}
\urldef\tempurl%
\url{https://doi.org/10.1145/3459637.3482287}
\showDOI{\tempurl}


\bibitem[Memarrast et~al\mbox{.}(2021)]%
        {memarrast2021fairnessrobustlearningrank}
\bibfield{author}{\bibinfo{person}{Omid Memarrast}, \bibinfo{person}{Ashkan Rezaei}, \bibinfo{person}{Rizal Fathony}, {and} \bibinfo{person}{Brian Ziebart}.} \bibinfo{year}{2021}\natexlab{}.
\newblock \bibinfo{title}{Fairness for Robust Learning to Rank}.
\newblock
\showeprint[arxiv]{2112.06288}~[cs.LG]
\urldef\tempurl%
\url{https://arxiv.org/abs/2112.06288}
\showURL{%
\tempurl}


\bibitem[Moffat and Zobel(2008)]%
        {moffat2008rbp}
\bibfield{author}{\bibinfo{person}{Alistair Moffat} {and} \bibinfo{person}{Justin Zobel}.} \bibinfo{year}{2008}\natexlab{}.
\newblock \showarticletitle{Rank-biased precision for measurement of retrieval effectiveness}.
\newblock \bibinfo{journal}{\emph{ACM Trans. Inf. Syst.}} \bibinfo{volume}{27}, \bibinfo{number}{1}, Article \bibinfo{articleno}{2} (\bibinfo{date}{Dec.} \bibinfo{year}{2008}), \bibinfo{numpages}{27}~pages.
\newblock
\showISSN{1046-8188}
\urldef\tempurl%
\url{https://doi.org/10.1145/1416950.1416952}
\showDOI{\tempurl}


\bibitem[Panigrahi et~al\mbox{.}(2012)]%
        {Panigrahi2012minmax}
\bibfield{author}{\bibinfo{person}{Debmalya Panigrahi}, \bibinfo{person}{Atish Das~Sarma}, \bibinfo{person}{Gagan Aggarwal}, {and} \bibinfo{person}{Andrew Tomkins}.} \bibinfo{year}{2012}\natexlab{}.
\newblock \showarticletitle{Online selection of diverse results}. In \bibinfo{booktitle}{\emph{Proceedings of the Fifth ACM International Conference on Web Search and Data Mining}} (Seattle, Washington, USA) \emph{(\bibinfo{series}{WSDM '12})}. \bibinfo{publisher}{Association for Computing Machinery}, \bibinfo{address}{New York, NY, USA}, \bibinfo{pages}{263–272}.
\newblock
\showISBNx{9781450307475}
\urldef\tempurl%
\url{https://doi.org/10.1145/2124295.2124329}
\showDOI{\tempurl}


\bibitem[Radlinski et~al\mbox{.}(2008)]%
        {radlanski2008diverseonline}
\bibfield{author}{\bibinfo{person}{Filip Radlinski}, \bibinfo{person}{Robert Kleinberg}, {and} \bibinfo{person}{Thorsten Joachims}.} \bibinfo{year}{2008}\natexlab{}.
\newblock \showarticletitle{Learning diverse rankings with multi-armed bandits}. In \bibinfo{booktitle}{\emph{Proceedings of the 25th International Conference on Machine Learning}} (Helsinki, Finland) \emph{(\bibinfo{series}{ICML '08})}. \bibinfo{publisher}{Association for Computing Machinery}, \bibinfo{address}{New York, NY, USA}, \bibinfo{pages}{784–791}.
\newblock
\showISBNx{9781605582054}
\urldef\tempurl%
\url{https://doi.org/10.1145/1390156.1390255}
\showDOI{\tempurl}


\bibitem[Robertson(2006)]%
        {robertson2006risk}
\bibfield{author}{\bibinfo{person}{Stephen Robertson}.} \bibinfo{year}{2006}\natexlab{}.
\newblock \showarticletitle{On GMAP: and other transformations}. In \bibinfo{booktitle}{\emph{Proceedings of the 15th ACM International Conference on Information and Knowledge Management}} (Arlington, Virginia, USA) \emph{(\bibinfo{series}{CIKM '06})}. \bibinfo{publisher}{Association for Computing Machinery}, \bibinfo{address}{New York, NY, USA}, \bibinfo{pages}{78–83}.
\newblock
\showISBNx{1595934332}
\urldef\tempurl%
\url{https://doi.org/10.1145/1183614.1183630}
\showDOI{\tempurl}


\bibitem[Rockafellar and Uryasev(2000)]%
        {Rockafellar2000OptimizationOC}
\bibfield{author}{\bibinfo{person}{R.~Tyrrell Rockafellar} {and} \bibinfo{person}{Stanislav Uryasev}.} \bibinfo{year}{2000}\natexlab{}.
\newblock \showarticletitle{Optimization of conditional value-at risk}.
\newblock \bibinfo{journal}{\emph{Journal of Risk}}  \bibinfo{volume}{3} (\bibinfo{year}{2000}), \bibinfo{pages}{21--41}.
\newblock
\urldef\tempurl%
\url{https://api.semanticscholar.org/CorpusID:854622}
\showURL{%
\tempurl}


\bibitem[Sakai et~al\mbox{.}(2013)]%
        {sakai2013ntcirintenttask}
\bibfield{author}{\bibinfo{person}{Tetsuya Sakai}, \bibinfo{person}{Zhicheng Dou}, \bibinfo{person}{Takehiro Yamamoto}, \bibinfo{person}{Yiqun Liu}, \bibinfo{person}{Min Zhang}, \bibinfo{person}{Makoto~P. Kato}, \bibinfo{person}{Ruihua Song}, {and} \bibinfo{person}{Mayu Iwata}.} \bibinfo{year}{2013}\natexlab{}.
\newblock \showarticletitle{Summary of the NTCIR-10 INTENT-2 task: subtopic mining and search result diversification}. In \bibinfo{booktitle}{\emph{Proceedings of the 36th International ACM SIGIR Conference on Research and Development in Information Retrieval}} (Dublin, Ireland) \emph{(\bibinfo{series}{SIGIR '13})}. \bibinfo{publisher}{Association for Computing Machinery}, \bibinfo{address}{New York, NY, USA}, \bibinfo{pages}{761–764}.
\newblock
\showISBNx{9781450320344}
\urldef\tempurl%
\url{https://doi.org/10.1145/2484028.2484104}
\showDOI{\tempurl}


\bibitem[Sakai et~al\mbox{.}(2022)]%
        {sakai2022versatileframeworkevaluatingranked}
\bibfield{author}{\bibinfo{person}{Tetsuya Sakai}, \bibinfo{person}{Jin~Young Kim}, {and} \bibinfo{person}{Inho Kang}.} \bibinfo{year}{2022}\natexlab{}.
\newblock \bibinfo{title}{A Versatile Framework for Evaluating Ranked Lists in terms of Group Fairness and Relevance}.
\newblock
\showeprint[arxiv]{2204.00280}~[cs.IR]
\urldef\tempurl%
\url{https://arxiv.org/abs/2204.00280}
\showURL{%
\tempurl}


\bibitem[Sakai and Song(2011)]%
        {sakai2011diversity}
\bibfield{author}{\bibinfo{person}{Tetsuya Sakai} {and} \bibinfo{person}{Ruihua Song}.} \bibinfo{year}{2011}\natexlab{}.
\newblock \showarticletitle{Evaluating diversified search results using per-intent graded relevance}. In \bibinfo{booktitle}{\emph{Proceedings of the 34th International ACM SIGIR Conference on Research and Development in Information Retrieval}} (Beijing, China) \emph{(\bibinfo{series}{SIGIR '11})}. \bibinfo{publisher}{Association for Computing Machinery}, \bibinfo{address}{New York, NY, USA}, \bibinfo{pages}{1043–1052}.
\newblock
\showISBNx{9781450307574}
\urldef\tempurl%
\url{https://doi.org/10.1145/2009916.2010055}
\showDOI{\tempurl}


\bibitem[Santos et~al\mbox{.}(2010)]%
        {santos2010xquad}
\bibfield{author}{\bibinfo{person}{Rodrygo L.~T. Santos}, \bibinfo{person}{Jie Peng}, \bibinfo{person}{Craig Macdonald}, {and} \bibinfo{person}{Iadh Ounis}.} \bibinfo{year}{2010}\natexlab{}.
\newblock \showarticletitle{Explicit search result diversification through sub-queries}. In \bibinfo{booktitle}{\emph{Proceedings of the 32nd European Conference on Advances in Information Retrieval}} (Milton Keynes, UK) \emph{(\bibinfo{series}{ECIR'2010})}. \bibinfo{publisher}{Springer-Verlag}, \bibinfo{address}{Berlin, Heidelberg}, \bibinfo{pages}{87–99}.
\newblock
\showISBNx{3642122744}
\urldef\tempurl%
\url{https://doi.org/10.1007/978-3-642-12275-0_11}
\showDOI{\tempurl}


\bibitem[Shah et~al\mbox{.}(2024)]%
        {shah2024usinglargelanguagemodels}
\bibfield{author}{\bibinfo{person}{Chirag Shah}, \bibinfo{person}{Ryen~W. White}, \bibinfo{person}{Reid Andersen}, \bibinfo{person}{Georg Buscher}, \bibinfo{person}{Scott Counts}, \bibinfo{person}{Sarkar Snigdha~Sarathi Das}, \bibinfo{person}{Ali Montazer}, \bibinfo{person}{Sathish Manivannan}, \bibinfo{person}{Jennifer Neville}, \bibinfo{person}{Xiaochuan Ni}, \bibinfo{person}{Nagu Rangan}, \bibinfo{person}{Tara Safavi}, \bibinfo{person}{Siddharth Suri}, \bibinfo{person}{Mengting Wan}, \bibinfo{person}{Leijie Wang}, {and} \bibinfo{person}{Longqi Yang}.} \bibinfo{year}{2024}\natexlab{}.
\newblock \bibinfo{title}{Using Large Language Models to Generate, Validate, and Apply User Intent Taxonomies}.
\newblock
\showeprint[arxiv]{2309.13063}~[cs.IR]
\urldef\tempurl%
\url{https://arxiv.org/abs/2309.13063}
\showURL{%
\tempurl}


\bibitem[Singh and Joachims(2018)]%
        {Singh_2018exposurefairness}
\bibfield{author}{\bibinfo{person}{Ashudeep Singh} {and} \bibinfo{person}{Thorsten Joachims}.} \bibinfo{year}{2018}\natexlab{}.
\newblock \showarticletitle{Fairness of Exposure in Rankings}. In \bibinfo{booktitle}{\emph{Proceedings of the 24th ACM SIGKDD International Conference on Knowledge Discovery amp; Data Mining}} \emph{(\bibinfo{series}{KDD ’18})}. \bibinfo{publisher}{ACM}, \bibinfo{pages}{2219–2228}.
\newblock
\urldef\tempurl%
\url{https://doi.org/10.1145/3219819.3220088}
\showDOI{\tempurl}


\bibitem[Steck(2018)]%
        {steck2018cali}
\bibfield{author}{\bibinfo{person}{Harald Steck}.} \bibinfo{year}{2018}\natexlab{}.
\newblock \showarticletitle{Calibrated recommendations}. In \bibinfo{booktitle}{\emph{Proceedings of the 12th ACM Conference on Recommender Systems}} (Vancouver, British Columbia, Canada) \emph{(\bibinfo{series}{RecSys '18})}. \bibinfo{publisher}{Association for Computing Machinery}, \bibinfo{address}{New York, NY, USA}, \bibinfo{pages}{154–162}.
\newblock
\showISBNx{9781450359016}
\urldef\tempurl%
\url{https://doi.org/10.1145/3240323.3240372}
\showDOI{\tempurl}


\bibitem[Sun et~al\mbox{.}(2024)]%
        {Sun2024LLM}
\bibfield{author}{\bibinfo{person}{Zhu Sun}, \bibinfo{person}{Hongyang Liu}, \bibinfo{person}{Xinghua Qu}, \bibinfo{person}{Kaidong Feng}, \bibinfo{person}{Yan Wang}, {and} \bibinfo{person}{Yew-Soon Ong}.} \bibinfo{year}{2024}\natexlab{}.
\newblock \showarticletitle{Large Language Models for Intent-Driven Session Recommendations}. In \bibinfo{booktitle}{\emph{Proceedings of the 47th ACM SIGIR Conference on Research and Development in Information Retrieval (SIGIR '24)}}. \bibinfo{pages}{324--334}.
\newblock
\urldef\tempurl%
\url{https://doi.org/10.1145/3626772.3657688}
\showDOI{\tempurl}


\bibitem[Takehi et~al\mbox{.}(2025)]%
        {takehi2025lara}
\bibfield{author}{\bibinfo{person}{Rikiya Takehi}, \bibinfo{person}{Ellen~M. Voorhees}, \bibinfo{person}{Tetsuya Sakai}, {and} \bibinfo{person}{Ian Soboroff}.} \bibinfo{year}{2025}\natexlab{}.
\newblock \showarticletitle{LLM-Assisted Relevance Assessments: When Should We Ask LLMs for Help?}. In \bibinfo{booktitle}{\emph{Proceedings of the 48th International ACM SIGIR Conference on Research and Development in Information Retrieval}} (Padua, Italy) \emph{(\bibinfo{series}{SIGIR '25})}. \bibinfo{publisher}{Association for Computing Machinery}, \bibinfo{address}{New York, NY, USA}, \bibinfo{pages}{95–105}.
\newblock
\showISBNx{9798400715921}
\urldef\tempurl%
\url{https://doi.org/10.1145/3726302.3729916}
\showDOI{\tempurl}


\bibitem[Togashi et~al\mbox{.}(2024)]%
        {togashi2024safecollaborativefiltering}
\bibfield{author}{\bibinfo{person}{Riku Togashi}, \bibinfo{person}{Tatsushi Oka}, \bibinfo{person}{Naoto Ohsaka}, {and} \bibinfo{person}{Tetsuro Morimura}.} \bibinfo{year}{2024}\natexlab{}.
\newblock \bibinfo{title}{Safe Collaborative Filtering}.
\newblock
\showeprint[arxiv]{2306.05292}~[cs.IR]
\urldef\tempurl%
\url{https://arxiv.org/abs/2306.05292}
\showURL{%
\tempurl}


\bibitem[Tsai and Wang(2012)]%
        {tsai2012pmpt}
\bibfield{author}{\bibinfo{person}{Ming{-}Feng Tsai} {and} \bibinfo{person}{Chuan{-}Ju Wang}.} \bibinfo{year}{2012}\natexlab{}.
\newblock \showarticletitle{Post-Modern Portfolio Theory for Information Retrieval}. In \bibinfo{booktitle}{\emph{Proceedings of the 35th European Conference on Information Retrieval}}. \bibinfo{pages}{80--85}.
\newblock


\bibitem[Vargas et~al\mbox{.}(2011)]%
        {Vargas2011Intent}
\bibfield{author}{\bibinfo{person}{Saul Vargas}, \bibinfo{person}{Pablo Castells}, {and} \bibinfo{person}{David Vallet}.} \bibinfo{year}{2011}\natexlab{}.
\newblock \showarticletitle{Intent-oriented Diversity in Recommender Systems}. In \bibinfo{booktitle}{\emph{Proceedings of the 34th ACM SIGIR Conference on Research and Development in Information Retrieval (SIGIR '11)}}. \bibinfo{pages}{1211--1212}.
\newblock
\urldef\tempurl%
\url{https://doi.org/10.1145/2009916.2010124}
\showDOI{\tempurl}


\bibitem[Voorhees(2005)]%
        {voorhees2005robusttrec}
\bibfield{author}{\bibinfo{person}{Ellen~M. Voorhees}.} \bibinfo{year}{2005}\natexlab{}.
\newblock \showarticletitle{The TREC robust retrieval track}.
\newblock \bibinfo{journal}{\emph{SIGIR Forum}} \bibinfo{volume}{39}, \bibinfo{number}{1} (\bibinfo{date}{June} \bibinfo{year}{2005}), \bibinfo{pages}{11–20}.
\newblock
\showISSN{0163-5840}
\urldef\tempurl%
\url{https://doi.org/10.1145/1067268.1067272}
\showDOI{\tempurl}


\bibitem[Wang et~al\mbox{.}(2017)]%
        {Wang_2017irgan}
\bibfield{author}{\bibinfo{person}{Jun Wang}, \bibinfo{person}{Lantao Yu}, \bibinfo{person}{Weinan Zhang}, \bibinfo{person}{Yu Gong}, \bibinfo{person}{Yinghui Xu}, \bibinfo{person}{Benyou Wang}, \bibinfo{person}{Peng Zhang}, {and} \bibinfo{person}{Dell Zhang}.} \bibinfo{year}{2017}\natexlab{}.
\newblock \showarticletitle{IRGAN: A Minimax Game for Unifying Generative and Discriminative Information Retrieval Models}. In \bibinfo{booktitle}{\emph{Proceedings of the 40th International ACM SIGIR Conference on Research and Development in Information Retrieval}} \emph{(\bibinfo{series}{SIGIR ’17})}. \bibinfo{publisher}{ACM}, \bibinfo{pages}{515–524}.
\newblock
\urldef\tempurl%
\url{https://doi.org/10.1145/3077136.3080786}
\showDOI{\tempurl}


\bibitem[Wang and Zhu(2009)]%
        {wang2009portfolio}
\bibfield{author}{\bibinfo{person}{Jun Wang} {and} \bibinfo{person}{Jianhan Zhu}.} \bibinfo{year}{2009}\natexlab{}.
\newblock \showarticletitle{Portfolio Theory of Information Retrieval}. In \bibinfo{booktitle}{\emph{Proceedings of the 32nd International ACM SIGIR Conference on Research and Development in Information Retrieval}}. \bibinfo{pages}{115--122}.
\newblock
\urldef\tempurl%
\url{https://doi.org/10.1145/1571941.1571963}
\showDOI{\tempurl}


\bibitem[Wang et~al\mbox{.}(2012)]%
        {wang2012risk}
\bibfield{author}{\bibinfo{person}{Lidan Wang}, \bibinfo{person}{Paul~N. Bennett}, {and} \bibinfo{person}{Kevyn Collins-Thompson}.} \bibinfo{year}{2012}\natexlab{}.
\newblock \showarticletitle{Robust ranking models via risk-sensitive optimization}. In \bibinfo{booktitle}{\emph{Proceedings of the 35th International ACM SIGIR Conference on Research and Development in Information Retrieval}} (Portland, Oregon, USA) \emph{(\bibinfo{series}{SIGIR '12})}. \bibinfo{publisher}{Association for Computing Machinery}, \bibinfo{address}{New York, NY, USA}, \bibinfo{pages}{761–770}.
\newblock
\showISBNx{9781450314725}
\urldef\tempurl%
\url{https://doi.org/10.1145/2348283.2348385}
\showDOI{\tempurl}


\bibitem[Wang et~al\mbox{.}(2018)]%
        {wang2018searchdiversity}
\bibfield{author}{\bibinfo{person}{Xiaojie Wang}, \bibinfo{person}{Ji-Rong Wen}, \bibinfo{person}{Zhicheng Dou}, \bibinfo{person}{Tetsuya Sakai}, {and} \bibinfo{person}{Rui Zhang}.} \bibinfo{year}{2018}\natexlab{}.
\newblock \showarticletitle{Search Result Diversity Evaluation Based on Intent Hierarchies}.
\newblock \bibinfo{journal}{\emph{IEEE Transactions on Knowledge and Data Engineering}} \bibinfo{volume}{30}, \bibinfo{number}{1} (\bibinfo{year}{2018}), \bibinfo{pages}{156--169}.
\newblock
\urldef\tempurl%
\url{https://doi.org/10.1109/TKDE.2017.2729559}
\showDOI{\tempurl}


\bibitem[Wang et~al\mbox{.}(2025)]%
        {Wang2025mood}
\bibfield{author}{\bibinfo{person}{Yuyan Wang}, \bibinfo{person}{Cheenar Banerjee}, \bibinfo{person}{Samer Chucri}, \bibinfo{person}{Fabio Soldo}, \bibinfo{person}{Sriraj Badam}, \bibinfo{person}{Ed~H. Chi}, {and} \bibinfo{person}{Minmin Chen}.} \bibinfo{year}{2025}\natexlab{}.
\newblock \showarticletitle{Beyond Item Dissimilarities: Diversifying by Intent in Recommender Systems}. In \bibinfo{booktitle}{\emph{Proceedings of the 31st ACM SIGKDD Conference on Knowledge Discovery and Data Mining V.1}} \emph{(\bibinfo{series}{KDD ’25})}. \bibinfo{publisher}{ACM}, \bibinfo{pages}{2672–2681}.
\newblock
\urldef\tempurl%
\url{https://doi.org/10.1145/3690624.3709429}
\showDOI{\tempurl}


\bibitem[Wang et~al\mbox{.}(2023)]%
        {Wang2023IDCL}
\bibfield{author}{\bibinfo{person}{Yuling Wang}, \bibinfo{person}{Xiao Wang}, \bibinfo{person}{Xiangzhou Huang}, \bibinfo{person}{Yanhua Yu}, \bibinfo{person}{Haoyang Li}, \bibinfo{person}{Mengdi Zhang}, \bibinfo{person}{Zirui Guo}, {and} \bibinfo{person}{Wei Wu}.} \bibinfo{year}{2023}\natexlab{}.
\newblock \showarticletitle{Intent-aware Recommendation via Disentangled Graph Contrastive Learning}. In \bibinfo{booktitle}{\emph{Proceedings of the 32nd International Joint Conference on Artificial Intelligence (IJCAI '23)}}. \bibinfo{pages}{2343--2351}.
\newblock
\urldef\tempurl%
\url{https://doi.org/10.24963/ijcai.2023/260}
\showDOI{\tempurl}


\bibitem[Welch et~al\mbox{.}(2011)]%
        {welch2011diversity}
\bibfield{author}{\bibinfo{person}{Michael~J. Welch}, \bibinfo{person}{Junghoo Cho}, {and} \bibinfo{person}{Christopher Olston}.} \bibinfo{year}{2011}\natexlab{}.
\newblock \showarticletitle{Search result diversity for informational queries}. In \bibinfo{booktitle}{\emph{Proceedings of the 20th International Conference on World Wide Web}} (Hyderabad, India) \emph{(\bibinfo{series}{WWW '11})}. \bibinfo{publisher}{Association for Computing Machinery}, \bibinfo{address}{New York, NY, USA}, \bibinfo{pages}{237–246}.
\newblock
\showISBNx{9781450306324}
\urldef\tempurl%
\url{https://doi.org/10.1145/1963405.1963441}
\showDOI{\tempurl}


\bibitem[Wen et~al\mbox{.}(2022)]%
        {wen2022drorecrisk}
\bibfield{author}{\bibinfo{person}{Hongyi Wen}, \bibinfo{person}{Xinyang Yi}, \bibinfo{person}{Tiansheng Yao}, \bibinfo{person}{Jiaxi Tang}, \bibinfo{person}{Lichan Hong}, {and} \bibinfo{person}{Ed~H. Chi}.} \bibinfo{year}{2022}\natexlab{}.
\newblock \showarticletitle{Distributionally-robust Recommendations for Improving Worst-case User Experience}. In \bibinfo{booktitle}{\emph{Proceedings of the ACM Web Conference 2022}} (Virtual Event, Lyon, France) \emph{(\bibinfo{series}{WWW '22})}. \bibinfo{publisher}{Association for Computing Machinery}, \bibinfo{address}{New York, NY, USA}, \bibinfo{pages}{3606–3610}.
\newblock
\showISBNx{9781450390965}
\urldef\tempurl%
\url{https://doi.org/10.1145/3485447.3512255}
\showDOI{\tempurl}


\bibitem[White and Dumais(2009)]%
        {white2009worstintent}
\bibfield{author}{\bibinfo{person}{Ryen~W. White} {and} \bibinfo{person}{Susan~T. Dumais}.} \bibinfo{year}{2009}\natexlab{}.
\newblock \showarticletitle{Characterizing and predicting search engine switching behavior}. In \bibinfo{booktitle}{\emph{Proceedings of the 18th ACM Conference on Information and Knowledge Management}} (Hong Kong, China) \emph{(\bibinfo{series}{CIKM '09})}. \bibinfo{publisher}{Association for Computing Machinery}, \bibinfo{address}{New York, NY, USA}, \bibinfo{pages}{87–96}.
\newblock
\showISBNx{9781605585123}
\urldef\tempurl%
\url{https://doi.org/10.1145/1645953.1645967}
\showDOI{\tempurl}


\bibitem[Zehlike et~al\mbox{.}(2017)]%
        {Zehlike_2017}
\bibfield{author}{\bibinfo{person}{Meike Zehlike}, \bibinfo{person}{Francesco Bonchi}, \bibinfo{person}{Carlos Castillo}, \bibinfo{person}{Sara Hajian}, \bibinfo{person}{Mohamed Megahed}, {and} \bibinfo{person}{Ricardo Baeza-Yates}.} \bibinfo{year}{2017}\natexlab{}.
\newblock \showarticletitle{FA*IR: A Fair Top-k Ranking Algorithm}. In \bibinfo{booktitle}{\emph{Proceedings of the 2017 ACM on Conference on Information and Knowledge Management}} \emph{(\bibinfo{series}{CIKM ’17})}. \bibinfo{publisher}{ACM}, \bibinfo{pages}{1569–1578}.
\newblock
\urldef\tempurl%
\url{https://doi.org/10.1145/3132847.3132938}
\showDOI{\tempurl}


\end{thebibliography}
\bibliographystyle{ACM-Reference-Format}


\appendix

\section{Proofs}
\label{app: proofs}

\subsection{Proof of Proposition~\ref{prop:nphard}: NP-hardness}
\label{app:nphard}
\begin{proof}
We reduce \emph{Weighted Max-$k$-Cover} to VRisk minimization (with $k$ part of the input).\footnote{For fixed constant $k$, brute force $O(n^k)$ is polynomial.}

Given ground set $U$, weights $w:U\!\to\!\mathbb{R}_{\ge0}$, family $\{S_1,\dots,S_n\}\subseteq 2^U$, and budget $k$, build a single-query instance: intents $\intentsq{\query}=U$ with 
\begin{align*}
    \condprob{u}{\query} &= \frac{w(u)}{W}, \quad W=\sum_{u\in U}w(u).
\end{align*}
Create one document $d_j$ per set $S_j$ and set binary relevance
\begin{align*}
    \relevanceQI{d_j}{\query}{u} = \mathbf{1}[u\in S_j].
\end{align*}
Use the modular base metric and the target $\vtgt{\query}{u}=1/k$; fix $\beta=1$.

For any length-$k$ ranking $R$,
\begin{align*}
    \valuegiven{R}{\query}{u} &= \frac{1}{k}\sum_{d\in R}\mathbf{1}[u\in S(d)], \\
    \loss{R}{\query}{u} &= \max\left\{0,\frac{1}{k}-\valuegiven{R}{\query}{u}\right\} 
    = \frac{1}{k}\,\mathbf{1}\left[u\notin \bigcup_{d\in R}S(d)\right].
\end{align*}
Thus
\begin{align*}
    \vrisk{R}{\query}{1} 
    &= \sum_{u}\condprob{u}{\query}\,\loss{R}{\query}{u} \\
    &= \frac{1}{k}\left(1-\frac{1}{W}\sum_{u\in \cup_{d\in R}S(d)}w(u)\right).
\end{align*}
Minimizing $\vrisk{R}{\query}{1}$ is equivalent to maximizing the covered weight in Weighted Max-$k$-Cover. Hence VRisk minimization is NP-hard.
\end{proof}

\subsection{Proof of Theorem~\ref{thm:approx}: ($1-1/e$)-Optimality Guarantee}
\label{app:approx}

\begin{proof}[Proof of Theorem~\ref{thm:approx}]
Let
\begin{equation*}
\text{VRisk}(R):=\vrisk{R}{\query}{\beta}
=\min_{\zeta\in R}\Bigl[\zeta+\frac{1}{\beta}\sum_i \condprob{\intent_i}{\query}\,(\ell_i(R)-\zeta)_+\Bigr].
\end{equation*}
For the discrete nonnegative losses $\ell_i(R)$, a minimizer $\zeta^\star$ always lies in $[0,\max_i \ell_i(R)]$, hence $\zeta^\star\ge0$. For any fixed $\zeta\ge0$ define
\begin{align*}
&H_\zeta(R)\;:=\;\zeta+\frac{1}{\beta}\sum_i p_i\,\bigl(\ell_i(R)-\zeta\bigr)_+,\\
&\qquad p_i:=\condprob{\intent_i}{\query}.
\end{align*}
Then $\text{VRisk}(R)=\min_\zeta H_\zeta(R)$.

\paragraph{Step 1}
Fix $\zeta\ge0$ and intent $i$. Write $C_i:=\vtgt{\query}{\intent_i}-\zeta\ge0$ and $s_i(R)=\sum_{d\in R}v_i(d)$ with $v_i(d):=\relevanceQI{d}{\query}{\intent_i}/k\ge0$.
For $e\notin R$,
\begin{align*}
\Delta^{(\zeta)}_{i}(R;e)&
:=\bigl(C_i-s_i(R)\bigr)_+ - \bigl(C_i-s_i(R)-v_i(e)\bigr)_+\\
&=\max\{0,\min\{v_i(e),\,C_i-s_i(R)\}\}.
\end{align*}
Hence if $R\subseteq S$ then $s_i(R)\le s_i(S)$ and thus
$\Delta^{(\zeta)}_{i}(R;e)\ge \Delta^{(\zeta)}_{i}(S;e)$.
Summing with nonnegative weights $p_i/\beta$ gives, for all $R\subseteq S$ and $e\notin S$,
\begin{equation*}
H_\zeta(R)-H_\zeta(R\cup\{e\})
\;\ge\;
H_\zeta(S)-H_\zeta(S\cup\{e\}).
\tag{$\ast$}
\end{equation*}
That is, the \emph{risk drop at fixed $\zeta$} has diminishing returns.

\paragraph{Step 2 (one-step greedy progress).}
Let $R_t$ be the greedy set after $t$ steps and let $\zeta_t\in\arg\min_\zeta H_\zeta(R_t)$.
For any $e\notin R_t$,
\begin{equation*}
\text{VRisk}(R_t)-\text{VRisk}(R_t\cup\{e\})
\;\ge\; H_{\zeta_t}(R_t)-H_{\zeta_t}(R_t\cup\{e\})
\end{equation*}
since $\text{VRisk}(R_t)=H_{\zeta_t}(R_t)$ and $\text{VRisk}(R_t\cup\{e\})\le H_{\zeta_t}(R_t\cup\{e\})$.
Let $O$ be an optimal $k$-set. Using ($\ast$) and averaging over $e\in O$,
\begin{align*}
\max_{e\notin R_t}\bigl[H_{\zeta_t}(R_t)-H_{\zeta_t}(R_t\cup\{e\})\bigr]
&\;\ge\;\frac{1}{k}\sum_{e\in O}\bigl[H_{\zeta_t}(R_t)-H_{\zeta_t}(R_t\cup\{e\})\bigr] \\
&\;\ge\;\frac{1}{k}\Bigl(H_{\zeta_t}(R_t)-H_{\zeta_t}(R_t\cup O)\Bigr).
\end{align*}
Monotonicity of $H_{\zeta_t}$ in $R$ gives $H_{\zeta_t}(R_t\cup O)\le H_{\zeta_t}(O)$, and by definition $\text{VRisk}(O)=\min_\zeta H_\zeta(O)\le H_{\zeta_t}(O)$. Therefore
\begin{equation*}
H_{\zeta_t}(R_t)-H_{\zeta_t}(R_t\cup O)
\;\ge\; H_{\zeta_t}(R_t)-H_{\zeta_t}(O)
\;\ge\; \text{VRisk}(R_t)-\text{VRisk}(O).
\end{equation*}
Combining the displays and taking greedy $e_{t+1}$,
\begin{equation*}
\text{VRisk}(R_t)-\text{VRisk}(R_{t+1})
\;\ge\;\frac{1}{k}\bigl(\text{VRisk}(R_t)-\text{VRisk}(O)\bigr).
\end{equation*}

\paragraph{Step 3.}
Let $\Delta_t:=\text{VRisk}(R_t)-\text{VRisk}(O)$. Then
\begin{equation*}
\Delta_{t+1}\le \left(1-\frac{1}{k}\right)\Delta_t,
\end{equation*}
so
\begin{equation*}
\Delta_k\le \left(1-\frac{1}{k}\right)^k\Delta_0\le e^{-1}\Delta_0.
\end{equation*}
Equivalently,
\begin{equation*}
\text{VRisk}(\varnothing)-\text{VRisk}(R_k)
\;\ge\;\left(1-\frac{1}{e}\right)\bigl(\text{VRisk}(\varnothing)-\text{VRisk}(O)\bigr),
\end{equation*}
i.e., the greedy $R_k$ captures at least $(1-1/e)$ of the optimal risk reduction.
\end{proof}

\subsection{Proof of Theorem~\ref{thm:ndcg_approx}: NDCG-Risk Approximation}
\label{app:ndcg_approx}
\begin{proof}[Proof of Theorem~\ref{thm:ndcg_approx}]
Let
\begin{equation*}
F(R) = V_{\text{Risk}}(\varnothing)-V_{\text{Risk}}(R) = \Delta(R)
\end{equation*}
be the \emph{risk--reduction} set function. $F$ is \emph{monotone} because adding a document can only lower $V_{\text{Risk}}$.

\paragraph{Step 1: submodularity ratio.} For any two prefixes $A\subseteq B\subseteq\mathcal D$ and any document $d\notin B$ define the marginal gains
\begin{equation*}
\Delta(d\mid X) = F(X\cup\{d\}) - F(X), \quad X\in\{A,B\}.
\end{equation*}
Write $|A|=r$, $|B|=t$ ($r<t<k$). With the NDCG discount vector $w_1\ge\dots\ge w_k$, $d$ is placed at position $r+1$ in $A\cup\{d\}$ and $t+1$ in $B\cup\{d\}$. For each intent $c$ the per--intent gain satisfies
\begin{align*}
\mathrm{NDCG}_c(A\cup\{d\}) - \mathrm{NDCG}_c(A) &= \frac{w_{r+1}\,g_{d,c}}{\mathrm{IDCG}_c}, \\
\mathrm{NDCG}_c(B\cup\{d\}) - \mathrm{NDCG}_c(B) &= \frac{w_{t+1}\,g_{d,c}}{\mathrm{IDCG}_c},
\end{align*}
where $g_{d,c} = 2^{\mathrm{rel}(d\mid q,c)} - 1 \ge 0$. Because $V_{\text{Risk}}$ is a positive convex combination of these per--intent gains clipped by a hinge at $\zeta$, the clipping can \emph{only reduce} both numerators \emph{by the same amount}. Hence
\begin{align*}
\Delta(d\mid B) &\ge \frac{w_{t+1}}{w_{r+1}}\,\Delta(d\mid A) \\
&\ge \frac{w_k}{w_1}\,\Delta(d\mid A) \\
&\ge \underbrace{\frac{w_k}{\sum_{i=1}^{k}w_i}}_{=\gamma}\,\Delta(d\mid A).
\end{align*}
Thus the \emph{submodularity ratio} \cite{das2011submodular} of $F$ is lower-bounded by $\gamma$.

\paragraph{Step 2: greedy approximation.} For any monotone set function with submodularity ratio $\gamma$, the standard greedy algorithm attains the guarantee
\begin{equation*}
F(R_k)\ge(1-e^{-\gamma})F(R_k^\star)
\end{equation*}
under a cardinality constraint $k$. Applying this to $F=\Delta$ proves Theorem~\ref{thm:ndcg_approx}.
\end{proof}

\section{Additional Experiments and Results}
\subsection{Statistical Significance Testing}
\label{app: statistical significance}
Following IR best practice, we treat the set of queries in each benchmark as a sample from a larger population and test the null hypothesis that two systems have equal \emph{expected} performance across that population. For every query $q$ we compute the paired difference $d_{q}=M_A(q)-M_B(q)$ where $M$ is either our risk metric $V_\text{Risk}(R_k|q;\beta = 0.10)$ or the underlying expected-utility metric $V_{\text{std}}$. We then apply (i) a two-sided Wilcoxon signed–rank test and (ii) a paired randomization test with $B=100{,}000$ permutations. 
\begin{table}[h]
  \centering
  \small
  \caption{Paired significance tests at $\beta=0.10,\;k=10$. Asterisks mark values that remain below $\alpha=0.05$ after Holm–Bonferroni correction over all comparisons.}
  \vspace{-5pt}
  \setlength{\tabcolsep}{4pt}
  \begin{tabular}{lccccc}
    \toprule
    \multirow{2}{*}{Dataset} & \multirow{2}{*}{Metric} &
      \multicolumn{2}{c}{Wilcoxon $p$} &
      \multicolumn{2}{c}{Perm.\ $p$} \\
    \cmidrule(lr){3-4}\cmidrule(lr){5-6}
     & & vs Naive & vs xQuAD & vs Naive & vs xQuAD \\
    \midrule
    \multirow{2}{*}{INTENT-2}
      & $V_\text{Risk}$ & $1.1\times10^{-13}$\textsuperscript{*} & $1.8\times10^{-13}$\textsuperscript{*} & $<10^{-5}$\textsuperscript{*} & $<10^{-5}$\textsuperscript{*} \\
      & $V_{\text{std}}$& $1.2\times10^{-11}$\textsuperscript{*} & $1.2\times10^{-11}$\textsuperscript{*} & $<10^{-5}$\textsuperscript{*} & $<10^{-5}$\textsuperscript{*} \\ 
    \addlinespace
    \multirow{2}{*}{WEB}
      & $V_\text{Risk}$ & $7.7\times10^{-6}$\textsuperscript{*}  & $1.4\times10^{-6}$\textsuperscript{*} & $<10^{-5}$\textsuperscript{*} & $1.0\times10^{-5}$\textsuperscript{*} \\
      & $V_{\text{std}}$& $5.8\times10^{-9}$\textsuperscript{*}  & $5.8\times10^{-9}$\textsuperscript{*} & $<10^{-5}$\textsuperscript{*} & $<10^{-5}$\textsuperscript{*} \\
      \addlinespace
      \multirow{2}{*}{ML 32M}
      & $V_\text{Risk}$ & $7.5\times10^{-65}$\textsuperscript{*} & $9.6\times10^{-65}$\textsuperscript{*} & $<10^{-5}$\textsuperscript{*} & $<10^{-5}$\textsuperscript{*} \\
      & $V_{\text{std}}$& $2.6\times10^{-10}$\textsuperscript{*} & $1.4\times10^{-1}$        & $<10^{-5}$\textsuperscript{*} & $<10^{-5}$\textsuperscript{*} \\
    \bottomrule
  \end{tabular}
  \label{tab:sig}
  \vspace{-5pt}
\end{table}

Table~\ref{tab:sig} reports the resulting $p$-values. Asterisks mark values that remain below $\alpha=0.05$ after Holm–Bonferroni correction over all comparisons.

\subsection{Additional Results on Optimality}
\label{app:optimality}
\begin{figure*}[t]
    \centering
    \includegraphics[width=0.9\linewidth]{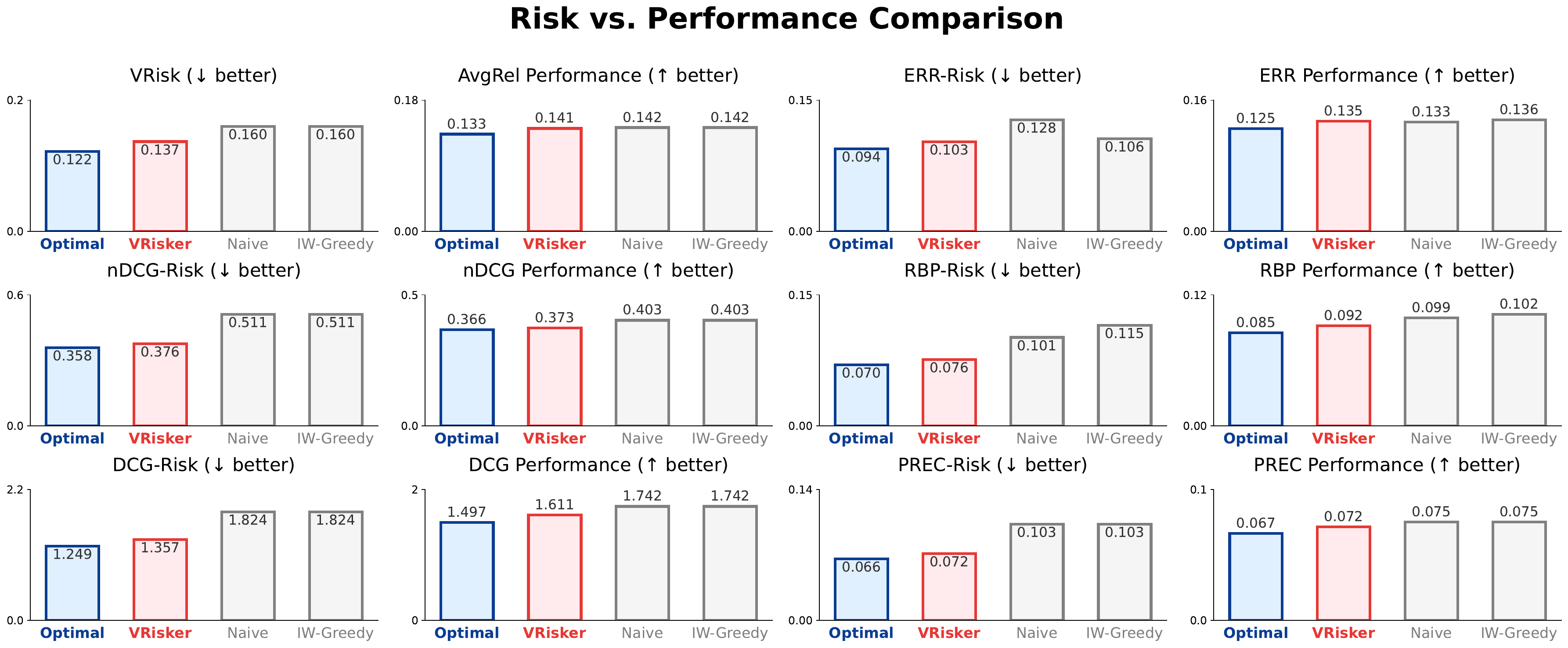}
    \caption{Optimality of VRisker. For each base metric, the left shows VRisk (lower is better) and the right shows $V_\text{IW}$ (higher is better). Figure~\ref{fig:optim} of the main text.}
    \label{fig:app_optim}
\end{figure*}
Figure~\ref{fig:app_optim} provides the full list of experiments on the optimality of VRisker (extended version of Figure~\ref{fig:optim} in the main text). We observe that VRisker is nearly optimal on all metrics. We also observe that the optimal performance, in return, lowers the standard performance on all metrics. This demonstrates that VRisker is robust while being strong in standard performance.

\subsection{Sensitivity Analysis on Prior Methods}
\label{app:lambdasweep}
\begin{figure*}[t]
  \centering
  \includegraphics[width=0.9\linewidth]{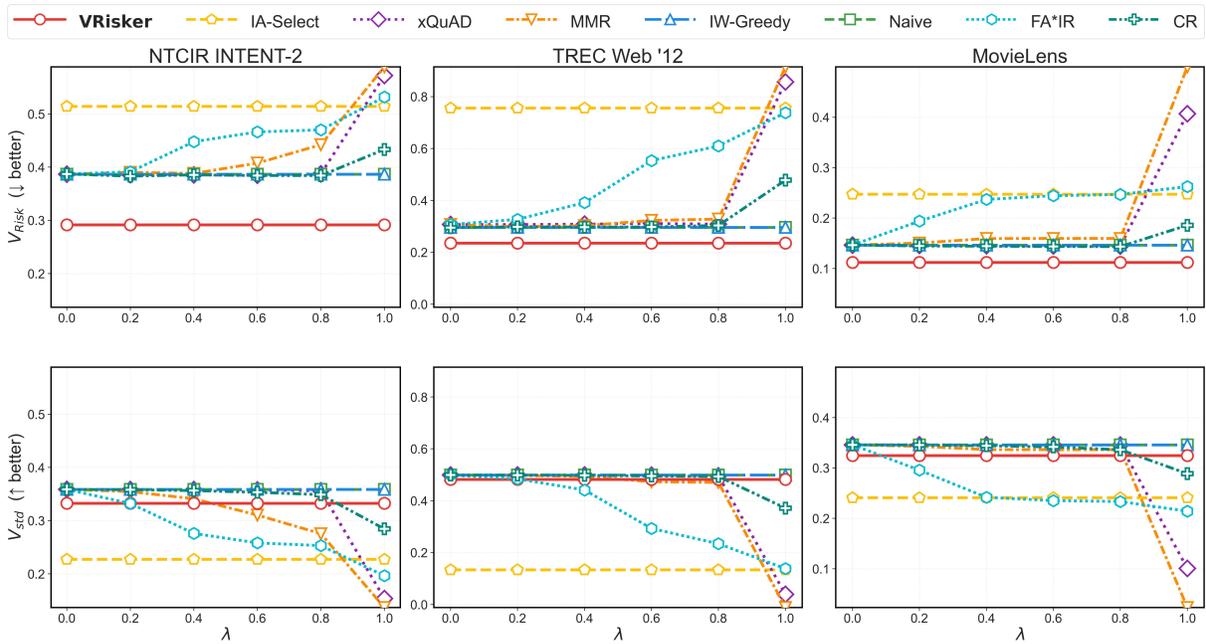}
  \caption{\textbf{$\lambda$-sensitivity (xQuAD, MMR, FA*IR, CR) on three datasets.} As $\lambda \!\to\! 1$ (pure diversity/calibration), all classical baselines \emph{collapse} in $V_{\text{std}}$ and do not provide meaningful tail-risk reduction; for intermediate $\lambda$ they behave similarly to Naive. In contrast, VRisker (no $\lambda$) remains consistently robust with small utility loss.}
  \label{fig:lambda-sweep-all}
\end{figure*}
xQuAD, MMR, FA*IR, and CR all have a weight that balances between relevance and diversity/fairness/calibration. Although they have different notations for weights, we generalize as $\lambda_\text{method}$ in our paper. Essentially, using the weight, they optimize below
\begin{equation*}
\label{eq:lambda-generic}
(1-\lambda_\text{method})V_\text{std}(d \mid q, R) \;+\; \lambda_\text{method}V_\text{X}(d \mid q, R).
\end{equation*}
Thus $\lambda{=}0$ reduces to pure relevance (Naive/IW), and $\lambda{=}1$ to the pure diversity/fairness/calibration.

Figure~\ref{fig:lambda-sweep-all} shows the sensitivity experiment on the same settings as the main text.

Across NTCIR INTENT-2, TREC Web'12, and MovieLens, sweeping $\lambda$ confirms a consistent pattern: (i) at $\lambda{=}1$ all classical baselines become diversity/calibration-only and suffer catastrophic drops in both $V_{\text{std}}$ (=$V_{\text{IW}}$) and $V_{\text{Risk}}$, (ii) for $\lambda \in [0,0.8]$ they track Naive on both risk and utility, offering little worst-case protection, and (iii) VRisker dominates on $V_{\text{Risk},k}$ while maintaining high $V_{\text{std}}$.

This justifies the fixed $\lambda$ choice in the main text and shows that the conclusions are not an artifact of particular hyperparameters.

\subsection{Additional Results}
\label{app: additional}
 In this appendix, we provide experiment results on the main text but with more baselines and with the raw metric values if they are normalized on the main text.

The additional baselines are:

\textbf{IA-SELECT~\citep{agrawal2009diversifysearch}.} This method builds a ranking that minimizes the chances that users will not click on any item in a ranking. Since IA-SELECT works best on a cascade assumption with click probabilities instead of relevance, we normalize the relevance scores to $[0, 1]$ when computing.

\textbf{Calibrated Recommendation (CR)~\citep{steck2018cali}.} This method aims to match the intent-probability proportions to the per-intent utility proportions in the ranking. $\lambda_\text{CR}$ is the weight which controls the balance between relevance and calibration, and is set to 0.5.

\textbf{FA*IR~\citep{Zehlike_2017}.} This method aims to fairly expose each group of documents in the ranking. In our case, we target intent-probability distribution as a fairness objective, trading off between relevance and fairness via the weight $\lambda_{\text{FA*IR}}$ (set to 0.5).

Figures~\ref{fig:app_metric},~\ref{fig:app_beta},~\ref{fig:app_k},~\ref{fig:app_c}, and ~\ref{fig:app_tgt} report the additional experimental results Figures~\ref{fig:vary metric},~\ref{fig:vary beta},~\ref{fig:vary k},~\ref{fig:vary noise}, and~\ref{fig:vary tgt} in the main text respectively.

We observe that the additional baselines are not robust or equal as robust as the naive baseline. This is mainly because the settings and the objectives of the methods are conceptually different from our objective. IA-SELECT only functions under the use of click probabilities under the cascade user assumption. FA*IR and CR respectively focus on fairness and calibration, which are different from diversity and robustness.

\begin{figure*}
    \centering
    \includegraphics[width=1.0\linewidth]{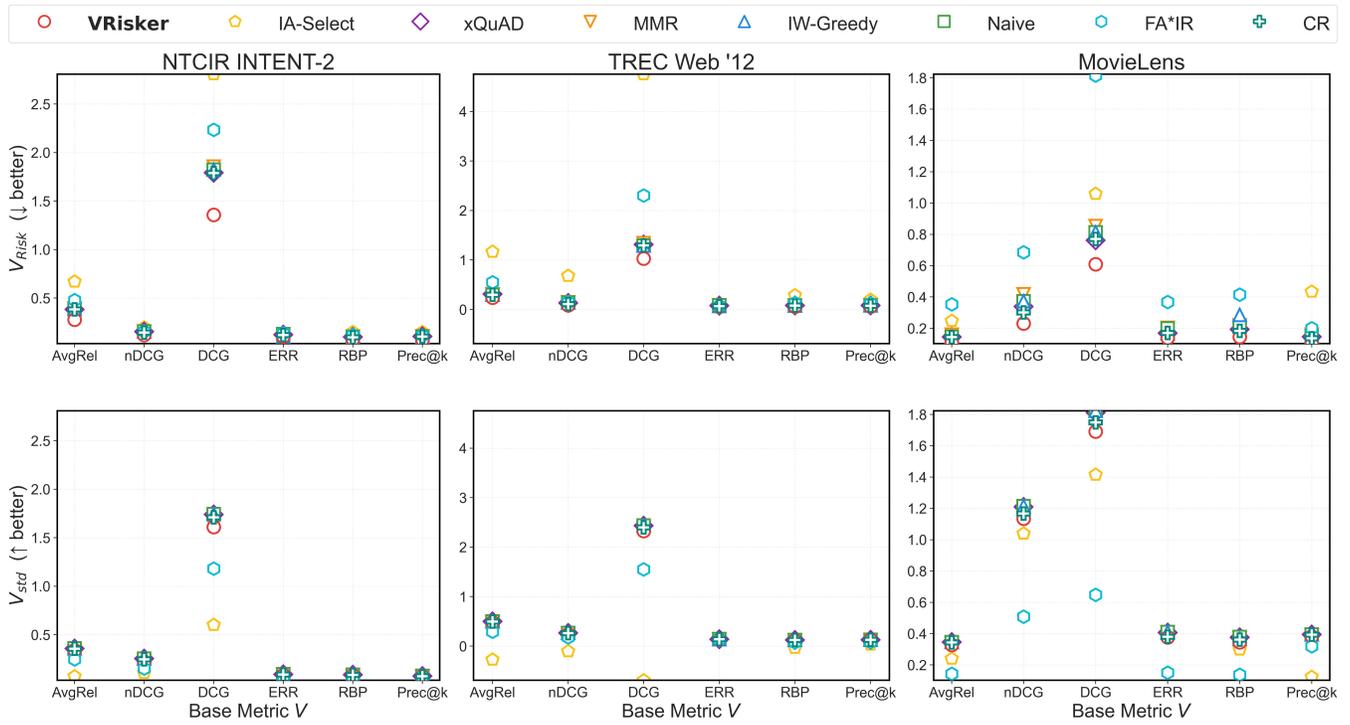}
    \caption{Risk and utility across base metrics. $\Delta V_{\text{Risk}}$ (lower is better) and $\Delta V_{\text{std}} / \Delta V_{\text{IW}}$ (higher is better) for AvgRel, nDCG, DCG, ERR, RBP, and Prec@k on INTENT-2, WEB’12, and MovieLens; Naive = 100\%. Figure~\ref{fig:vary metric} of the main text.}
    \label{fig:app_metric}
\end{figure*}

\begin{figure*}
    \centering
    \includegraphics[width=1.0\linewidth]{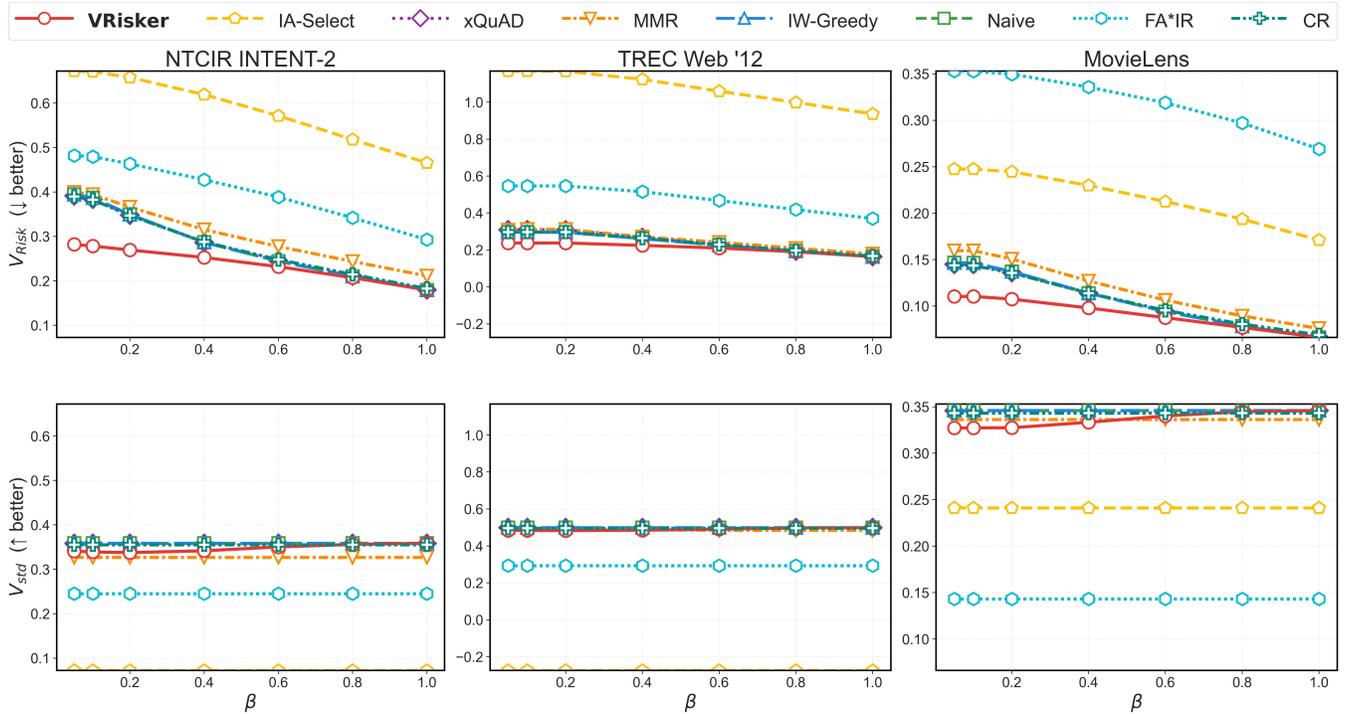}
    \caption{Effect of $\beta$ (pessimism level). As $\beta$ decreases, VRisker reduces tail loss more aggressively with a manageable drop in average utility. Figure~\ref{fig:vary beta} of the main text.}
    \label{fig:app_beta}
\end{figure*}

\begin{figure*}
    \centering
    \includegraphics[width=1.0\linewidth]{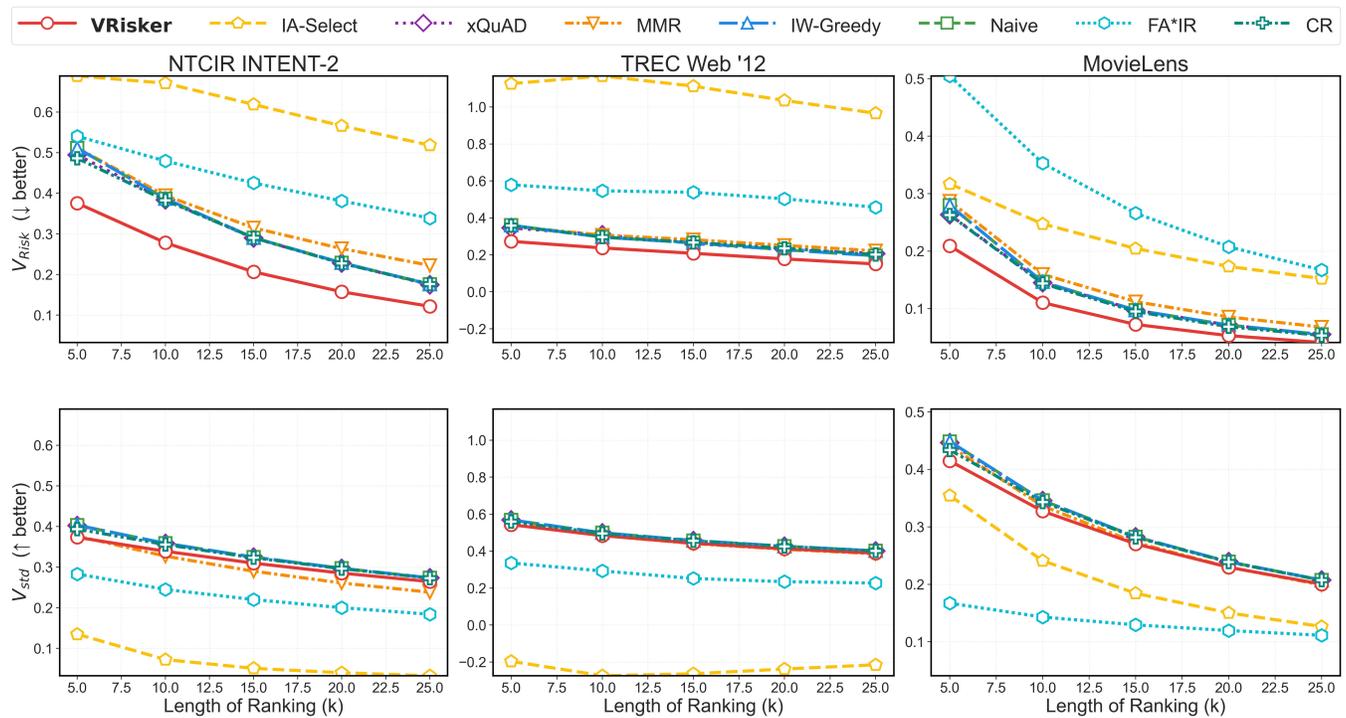}
    \caption{Effect of ranking length $k$. VRisker’s utility improves as $k$ grows while maintaining substantially lower tail risk than baselines. Figure~\ref{fig:vary k} of the main text.}
    \label{fig:app_k}
\end{figure*}

\begin{figure*}
    \centering
    \includegraphics[width=1.0\linewidth]{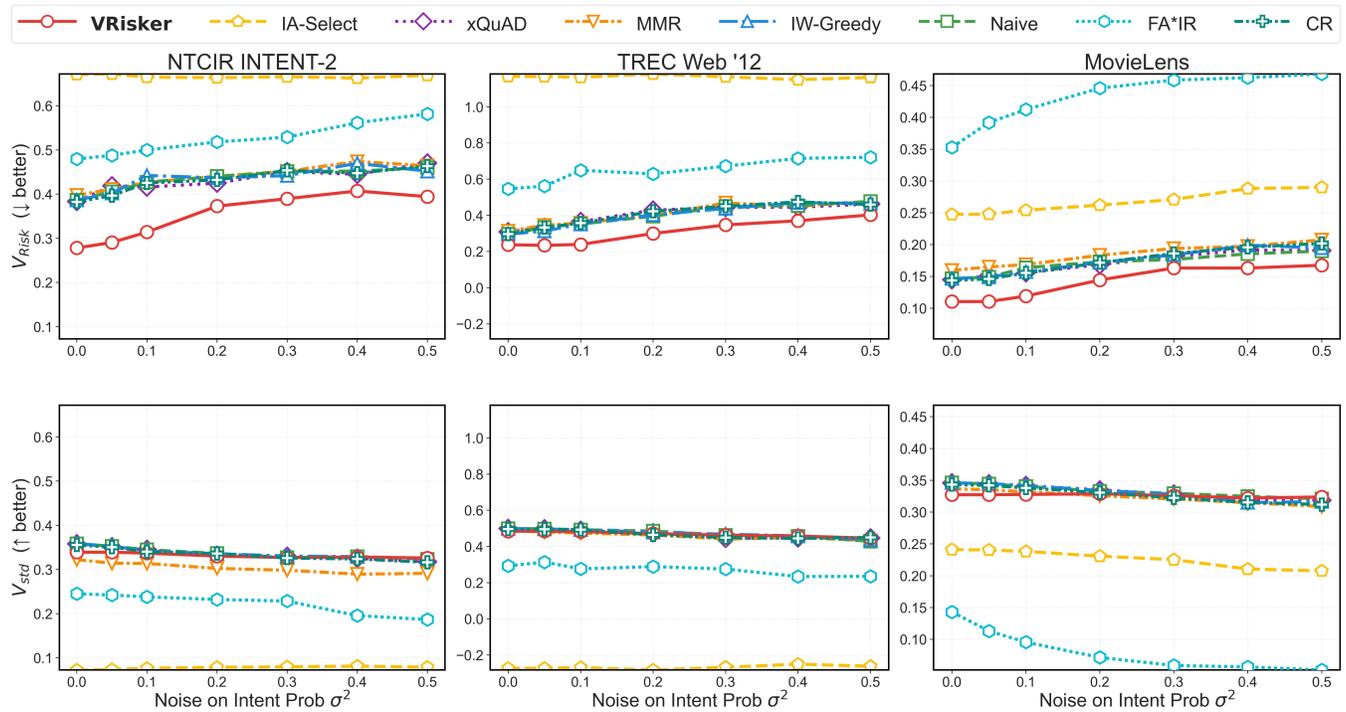}
    \caption{Noise in intent probabilities. VRisker remains most robust under Gaussian noise added to $Pr(c|q)$. Figure~\ref{fig:vary noise} of the main text.}
    \label{fig:app_c}
\end{figure*}

\begin{figure*}
    \centering
    \includegraphics[width=1.0\linewidth]{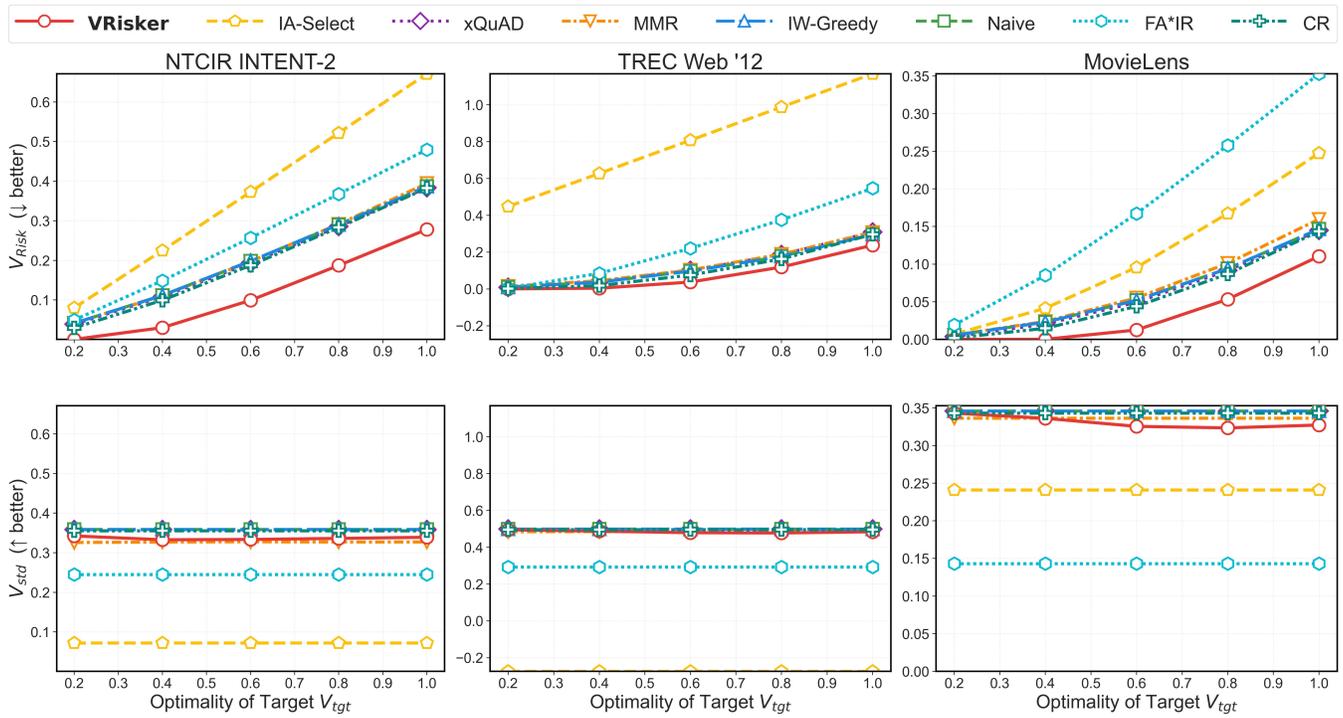}
    \caption{Target optimality sweep. Lower target optimality yields smaller losses and lets VRisker focus on utility once $V_{\text{Risk}} \approx 0$. Figure~\ref{fig:vary tgt} of the main text.}
    \label{fig:app_tgt}
\end{figure*}

\end{document}